\newif\ifarxiv\arxivtrue%
   \def\@citecolor{blue}%
   \def\@urlcolor{blue}%
   \def\@linkcolor{blue}%
\def\orcidID#1{\smash{\href{http://orcid.org/#1}{\protect\raisebox{-1.25pt}{\protect\includegraphics{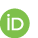}}}}}
\apptocmd{\thebibliography}{\hbadness 10000\relax}{}{} 
\newcolumntype{C}[1]{>{\centering\arraybackslash$}m{#1}<{$}}
\newcolumntype{L}[1]{>{\raggedleft\arraybackslash$}m{#1}<{$}}
\newcolumntype{R}[1]{>{\raggedright\arraybackslash$}m{#1}<{$}}
\newcommand{\runrel}{\mathrel{\rightarrow}}
\newcommand{\arun}[2][]{\mathrel{\raisebox{-3pt}{$\xrightarrow[#1]{#2}$}}}
\newcommand*{\lmbrace}{\{\mskip-4mu|}
\newcommand*{\rmbrace}{|\mskip-4mu\}}
\newcommand*{\mset}[1]{\lmbrace#1\rmbrace}
\newcommand{\angl}[1]{\left\langle#1\right\rangle}
\newcommand{\bigOh}[1]{\ensuremath{\mathcal{O}(#1)}}
\newcommand{\naturals}{\mathbb{N}}
\newcommand{\ltr}[1]{\mathtt{#1}}
\newcommand{\M}{\mathbb{M}}
\newcommand{\unit}{\mathbf{1}}
\newcommand{\zero}{\mathbf{0}}
\newcommand{\pempty}{1}
\newcommand{\cempty}{\square}
\newcommand{\pr}{\mathcal{R}}
\newcommand{\asymbol}{\mathbin{\scalerel*{\heartsuit}{\cdot}}}
\newcommand{\SP}{\ensuremath{\mathsf{SP}}}
\newcommand{\PC}{\ensuremath{\mathsf{PC}}}
\newcommand{\LStar}{\ensuremath{\mathtt{L}^{\!\star}}}
\newcommand{\lang}{\mathcal{L}}
\newcommand{\free}[1]{{#1}^\sharp}
\newcommand{\ext}[1]{{#1}^+}
\newcommand{\cinsert}[2]{#1[#2]}
\newcommand{\row}{\mathsf{row}}
\newcommand{\hyp}{\mathcal{H}}
\newcommand{\lp}[1]{\mathbf{#1}}
\newenvironment{oneshot}[1]{\@begintheorem{#1.}{\unskip}}{\@endtheorem}
\tikzset{every state/.style={minimum size=1pt}}
\begin{document}
\title{Learning Pomset Automata\thanks{This work was partially supported by the ERC Starting Grant ProFoundNet (679127) and the EPSRC Standard Grant CLeVer (EP/S028641/1). The authors thank Matteo Sammartino for useful discussions.}}


\author{%
    Gerco van Heerdt\inst{1}~\orcidID{0000-0003-0669-6865} (\Letter) 
    \and
    Tobias Kapp\'e\inst{2}~\orcidID{0000-0002-6068-880X} 
    \and\\
    Jurriaan Rot\inst{3}
    \and
    Alexandra Silva\inst{1}~\orcidID{0000-0001-5014-9784} 
}
\authorrunning{Van Heerdt et al.}
\institute{%
	University College London, London, UK\\
	\email{gerco.heerdt@ucl.ac.uk} \and
	Cornell University, Ithaca NY, USA \and
	Radboud University, Nijmegen, The Netherlands%
}
\maketitle
\begin{abstract}
	We extend the {\LStar} algorithm to learn bimonoids recognising pomset languages.
	We then identify a class of pomset automata that accepts precisely the class of pomset languages recognised by bimonoids and show how to convert between bimonoids and automata.

\end{abstract}

\section{Introduction}

Automata learning algorithms are useful in automated inference of models, which is needed for verification of hardware and software systems. In \emph{active}
learning, the algorithm interacts with a system through tests and observations to produce a model of the system's behaviour. One of the first active learning algorithms proposed was \LStar, due to Dana Angluin~\cite{Angluin87}, which infers a minimal deterministic automaton for a target regular language. \LStar\ has been used in a range of verification tasks, including learning error traces in a program~\cite{DBLP:conf/atva/ChapmanCKKST15}. For more advanced verification tasks, richer automata types are needed and \LStar\ has been extended to e.g.\ input-output~\cite{fides}, register~\cite{learnlib}, and weighted automata~\cite{wfa-pid}. None of the existing extensions can be used in analysis of concurrent programs.

Partially ordered multisets (pomsets)~\cite{grabowski-1981,gischer-1988} are basic structures used in the modeling and semantics of concurrent programs. Pomsets generalise words, allowing to capture both the sequential and the parallel structure of a trace in a concurrent program. Automata accepting pomset languages are therefore useful to study the operational semantics of concurrent programs---see, for instance, work on concurrent Kleene algebra~\cite{hoare-moeller-struth-wehrman-2009,laurence-struth-2014,DBLP:conf/concur/KappeBL0Z17,DBLP:conf/esop/KappeB0Z18}.

In this paper, we propose an active learning algorithm for a class of pomset automata. The approach is algebraic: we consider languages of pomsets recognised by bimonoids~\cite{lodaya-weil-2000} (which we shall refer to as pomset recognisers). This can be thought of as a generalisation of the classical approach to language theory of using monoids as word acceptors: bimonoids have an extra operation that models parallel composition in addition to sequential. The two operations give rise to a complex branching structure that makes the learning process non-trivial.

The key observation is that pomset recognisers are tree automata whose algebraic structure satisfies additional equations. We extend tree automata learning algorithms~\cite{DrewesH03,drewes2007,Sakakibara90} to pomset recognisers. The main challenge is to ensure that intermediate hypotheses in the algorithm are valid pomset recognisers, which is essential in practical scenarios where the learning process might not run to the very end, returning an approximation of the system under learning. This requires equations of bimonoids to be correctly propagated and preserved in the core data structure of the algorithm---the observation table. The proof of termination, in analogy to \LStar, relies on the existence of a canonical pomset recogniser of a language, which is based on its syntactic bimonoid. The steps of the algorithm provide hypotheses that get closer in size to the canonical recogniser.

Finally, we bridge the learning algorithm to pomset automata~\cite{DBLP:conf/concur/KappeBL0Z17,kappe-brunet-luttik-silva-zanasi-2018b} by providing two constructions that enable us to seamlessly move between pomset recognisers and pomset automata. Note that although bimonoids provide a useful formalism to denote pomset languages, which is amenable to the design of the learning algorithm, they enforce a redundancy that is not present in pomset automata: whereas a pomset automaton processes a pomset from left to right in sequence, one letter per branch at a time, a bimonoid needs to be able to take the pomset represented as a binary tree in any way and process it bottom-up. This requirement of different decompositions leading to the same result makes bimonoids in general much larger than pomset automata and hence the latter are, in general, a more efficient representation of a pomset language.

The rest of the paper is organised as follows. We conclude this introductory section with a review of relevant related work. \Cref{sec:prelims} contains the basic definitions on pomsets and pomset recognisers. The learning algorithm for pomset recognisers appears in \Cref{sec:learning}, including proofs to ensure termination and invariant preservation. \Cref{sec:conversion} presents constructions to translate between (a class of) pomset automata and pomset recognisers. We conclude with discussion of further work in \Cref{sec:discussion}. Omitted proofs appear in
\ifarxiv%
\Cref{appendix:proofs-algo,appendix:proofs-pas}%
\else%
the extended version~\cite{arxiv-version}%
\fi.

\paragraph{Related Work.}
There is a rich literature on adaptations and extensions of \LStar\ from deterministic automata
to various kinds of models, see, e.g.,~\cite{Vaandrager17,HowarS16} for an overview.
To the best of our knowledge, this paper is the first to provide an active learning algorithm
for pomset languages recognised by finite bimonoids.

Our algorithm learns an algebraic recogniser.
Urbat and Schr{\"{o}}der~\cite{urbat2020} provide a very general learning
approach for languages recognised by algebras for monads~\cite{Bojanczyk15,UrbatACM17}, based on a reduction to categorical automata, for which they present an \LStar-type
 algorithm. Their reduction gives rise to an infinite alphabet
in general, so tailored work is needed for deriving algorithms and finite representations.
This can be done for instance for monoids, recognising regular languages,
but it is not clear how this could extend to pomset recognisers. We present a
direct learning algorithm for bimonoids, which does not rely on any encoding.

Our concrete learning algorithm for bimonoids is closely related to
learning approaches for bottom-up tree automata~\cite{DrewesH03,drewes2007,Sakakibara90}: pomset languages can be viewed as
tree languages satisfying certain equations. Incorporating these equations turned out to be a non-trivial
task, which requires additional checks on the observation table during execution of the algorithm.

Conversion between recognisers and automata for a pomset language was first explored by Lodaya and Weil~\cite{lodaya-weil-2000,lodaya-weil-1998}.
Their results relate the expressive power of these formalisms to \emph{sr-expressions}.
As a result, converting between recognisers and automata using their construction uses an sr-expression as an intermediate representation, increasing the resulting state space.
Our construction, however, converts recognisers directly to pomset automata, which keeps the state space relatively small.
Moreover, Lodaya and Weil work focus on pomset languages of \emph{bounded width}, i.e., with an upper bound on the number of parallel events.
In contrast, our conversions work for all recognisable pomset languages (and a suitable class of pomset automata), including those of unbounded width.

\'{E}sik and N\'{e}meth~\cite{esik-nemeth-2004} considered automata and recognisers for \emph{biposets}, i.e., sp-pomsets without commutativity of parallel composition.
They equate languages recognised by \emph{bisemigroups} (bimonoids without commutativity or units) with those accepted by \emph{parenthesizing automata}.
Our equivalence is similar in structure, but relates a subclass of pomset automata to bimonoids instead.
The results in this paper can easily be adapted to learn representations of biposet languages using bisemigroups, and convert those to parenthesizing automata.

\section{Pomset Recognisers}\label{sec:prelims}

Throughout this paper we fix a finite \emph{alphabet} $\Sigma$ and assume $\cempty \not\in \Sigma$.
When defining sets parameterised by a set $X$, say $\mathsf{S}(X)$, we may use $\mathsf{S}$ to refer to $\mathsf{S}(\Sigma)$.%

We recall pomsets~\cite{gischer-1988,grabowski-1981}, a generalisation of words that model concurrent traces.
A \emph{labelled poset} over $X$ is a tuple $\lp{u} = \angl{S_\lp{u}, \leq_\lp{u}, \lambda_\lp{u}}$, where $S_\lp{u}$ is a finite set (the \emph{carrier} of $\lp{u}$), $\leq_\lp{u}$
is a partial order on $S_\lp{u}$ (the \emph{order} of $\lp{u}$), and $\lambda_\lp{u} \colon S_\lp{u} \to X$ is a function (the \emph{labelling} of $\lp{u}$).
Pomsets are labelled posets up to isomorphism.

\begin{definition}[Pomsets]%
\label{definition:lp-isomorphism-pomset}
Let $\lp{u}, \lp{v}$ be labelled posets over $X$.
An \emph{embedding} of $\lp{u}$ in $\lp{v}$ is an injection $h: S_\lp{u} \to S_\lp{v}$ such that $\lambda_\lp{v} \circ h = \lambda_\lp{u}$ and $s \leq_\lp{u} s'$ if and only if $h(s) \leq_\lp{v} h(s')$.
An \emph{isomorphism} is a bijective embedding whose inverse is also an embedding.
We say $\lp{u}$ is \emph{isomorphic} to $\lp{v}$, denoted $\lp{u} \cong \lp{v}$, if there exists an isomorphism between $\lp{u}$ and $\lp{v}$.
A \emph{pomset} over $X$ is an isomorphism class of labelled posets over $X$, i.e., $[\lp{v}] = \{ \lp{u} : \lp{u} \cong \lp{v} \}$.
When $u = [\lp{u}]$ and $v = [\lp{v}]$ are pomsets, $u$ is a \emph{subpomset} of $v$ when there exists an embedding of $\lp{u}$ in $\lp{v}$.
\end{definition}

When two pomsets are in scope, we tacitly assume that they are represented by labelled posets with disjoint carriers.
We write $\pempty$ for the empty pomset.
When $\ltr{a} \in X$, we write $\ltr{a}$ for the pomset represented by the labelled poset whose sole element is labelled by $\ltr{a}$.
Pomsets can be composed in sequence and in parallel:

\begin{definition}[Pomset composition]%
\label{definition:pomset-composition}
Let $u = [\lp{u}]$ and $v = [\lp{v}]$ be pomsets over $X$.
We write $u \parallel v$ for the \emph{parallel composition} of $u$ and $v$, which is the pomset over $X$ represented by the labelled poset
\begin{align*}
    \lp{u} \parallel \lp{v} &= \angl{S_\lp{u} \cup S_\lp{v}, \;\; {\leq_\lp{u}} \cup {\leq_\lp{v}}, \;\; \lambda_\lp{u} \cup \lambda_\lp{v}}
\intertext{%
Similarly, we write $u \cdot v$ for the \emph{sequential composition} of $u$ and $v$, that is, the pomset represented by the labelled poset
}
    \lp{u} \cdot \lp{v} &= \angl{S_\lp{u} \cup S_\lp{v}, \;\; {\leq_\lp{u}} \cup {\leq_\lp{v}} \cup S_\lp{u} \times S_\lp{v}, \;\; \lambda_\lp{u} \cup \lambda_\lp{v}}
\end{align*}
We may elide the dot for sequential composition, for instance writing $\ltr{a}\ltr{b}$ for $\ltr{a} \cdot \ltr{b}$.
\end{definition}

The pomsets we use can be built using sequential and parallel composition.

\begin{definition}[Series-parallel pomsets]%
\label{definition:pomset-sp}
The set of \emph{series-parallel pomsets (sp-pomsets)} over $X$, denoted $\SP(X)$, is the smallest set such that $\pempty \in \SP(X)$ and $\ltr{a} \in \SP(X)$ for every $\ltr{a} \in X$, closed under parallel and sequential composition.
\end{definition}

Concurrent systems admit executions of operations that are not only ordered in sequence but also allow parallel branches.
An algebraic structure consisting of both a sequential and a parallel composition operation, with a shared unit, is called a \emph{bimonoid}.
Formally, its definition is as follows.

\begin{definition}[Bimonoid]
	A \emph{bimonoid} is a tuple $\angl{M, \odot, \obar, \unit}$ where
	\begin{itemize}
		\item
			$M$ is a set called the \emph{carrier} of the bimonoid,
		\item
			$\odot$ is a binary associative operation on $M$,
		\item
			$\obar$ is a binary associative and commutative operation on $M$, and
		\item
			$\unit \in M$ is a unit for both $\odot$ (on both sides) and $\obar$.
	\end{itemize}
	Bimonoid homomorphisms are defined in the usual way.
\end{definition}

Given a set $X$, the \emph{free bimonoid}~\cite{gischer-1988} over $X$ is $\angl{\SP(X), \cdot, \parallel, \pempty}$.
The fact that it is free means that for every function $f \colon X \to M$ for a given bimonoid $\angl{M, \odot, \obar, \unit_M}$ there exists a unique bimonoid homomorphism $\free{f} \colon \SP(X) \to M$ such that the restriction of $\free{f}$ to $X$ is $f$.

Just as monoids can recognise words, bimonoids can recognise pomsets~\cite{lodaya-weil-2000}.
A bimonoid together with the witnesses of recognition is a \emph{pomset recogniser}.

\begin{definition}[Pomset recogniser]
	A \emph{pomset recogniser} is a tuple $\pr = \angl{M, \odot, \obar, \unit, i, F}$ where $\angl{M, \odot, \obar, \unit}$ is a bimonoid, $i \colon \Sigma \to M$, and $F \subseteq M$.
	The \emph{language recognised} by $\pr$ is given by
	\(
		\lang_\pr = \{u \in \SP : \free{i}(u) \in F\} \subseteq \SP.
	\)
\end{definition}

\begin{example}%
\label{ex:loop}
Suppose a program consists of a loop, where each iteration runs actions $\ltr{a}$ and $\ltr{b}$ in parallel.
We can describe the behaviour of this program by
\[
    \lang
        = {\{ \ltr{a} \parallel \ltr{b} \}}^*
        = \{ \pempty, \ltr{a} \parallel \ltr{b}, (\ltr{a} \parallel \ltr{b}) \cdot (\ltr{a} \parallel \ltr{b}), \ldots \}
\]

We can describe this language using a pomset recogniser, as follows.
Let $M = \{ q_\ltr{a}, q_\ltr{b}, q_1, q_\bot, \unit \}$, and let $\odot$ and $\obar$ be the operations on $M$ given by
\begin{mathpar}
q \odot q' =
    \begin{cases}
    q & q' = \unit \\
    q' & q = \unit \\
    q_1 & q = q' = q_1 \\
    q_\bot & \mathrm{otherwise}
    \end{cases}
\and
q \obar q' =
    \begin{cases}
    q & q' = \unit \\
    q' & q = \unit \\
    q_1 & \{ q, q' \} = \{ q_\ltr{a}, q_\ltr{b} \} \\
    q_\bot & \mathrm{otherwise}
    \end{cases}
\end{mathpar}
A straightforward proof verifies that $\angl{M, \odot, \obar, \unit}$ is a bimonoid.

We set $i(\ltr{a}) = q_\ltr{a}$, $i(\ltr{b}) = q_\ltr{b}$, and $F = \{ \unit, q_1 \}$.
Now, for $n > 0$:
\begin{align*}
    i^\sharp(\underbrace{(\ltr{a} \parallel \ltr{b}) \cdots (\ltr{a} \parallel \ltr{b})}_{\text{$n$ times}})
        = \underbrace{(i(\ltr{a}) \parallel i(\ltr{b})) \odot \cdots \odot (i(\ltr{a}) \parallel i(\ltr{b}))}_{\text{$n$ times}}
        = \underbrace{q_1 \odot \cdots \odot q_1}_{\text{$n$ times}}
        = q_1
\end{align*}
No other pomsets are mapped to $q_1$; hence, $\angl{M, \odot, \obar, \mathbf{1}, i, F}$ accepts $\mathcal{L}$.
\end{example}

\begin{example}%
\label{ex:nested}
Suppose a program solves a problem recursively, such that the recursive calls are performed in parallel.
In that case, the program would either perform the base action $\ltr{b}$, or some preprocessing action $\ltr{a}$ followed by running two copies of itself in parallel.
This behaviour can be described by the smallest pomset language $\lang$ satisfying the following inference rules:
\begin{mathpar}
    \inferrule{~}{%
        \ltr{b} \in \lang
    }
    \and
    \inferrule{%
        u, v \in \lang
    }{%
        \ltr{a} \cdot (u \parallel v) \in \lang
    }
\end{mathpar}

This language can be described by a pomset recogniser.
Let our carrier set be $M = \{ q_\ltr{a}, q_\ltr{b}, q_1, q_\bot, \unit \}$, and let $\odot$ and $\obar$ be the operations on $M$ given by
\begin{mathpar}
q \odot q' =
    \begin{cases}
    q & q' = \unit \\
    q' & q = \unit \\
    q_\ltr{b} & q = q_\ltr{a}, q' = q_1 \\
    q_\bot & \mathrm{otherwise}
    \end{cases}
\and
q \obar q' =
    \begin{cases}
    q & q' = \unit \\
    q' & q = \unit \\
    q_1 & q = q' = q_\ltr{b} \\
    q_\bot & \mathrm{otherwise}
    \end{cases}
\end{mathpar}
$\angl{M, \odot, \obar, \unit}$ is a bimonoid,  $F = \{ q_\ltr{b} \}$, and $i \colon \Sigma \to M$ is given by setting $i(\ltr{a}) = q_\ltr{a}$ and $i(\ltr{b}) = q_\ltr{b}$. 
One can then show that $\angl{M, \odot, \obar, \unit, i, F}$ accepts $\lang$.
\end{example}

\emph{Pomset contexts} are used to describe the behaviour of individual elements in a pomset recogniser.
Formally, the set of pomset contexts over a set $X$ is given by $\PC(X) = \SP(X \cup \{\cempty\})$.
Here the element $\cempty$ acts as a placeholder, where a pomset can be plugged in: given a context $c \in \PC(X)$ and $t \in \SP(X)$, let $\cinsert{c}{t} \in \SP(X)$ be obtained by substituting $t$ for $\cempty$ in $c$.

\section{Learning Pomset Recognisers}\label{sec:learning}

In this section we present our algorithm to learn pomset recognisers from an oracle (\emph{the teacher}) that answers \emph{membership} and \emph{equivalence queries}.
A membership query consists of a pomset, to which the teacher replies whether that pomset is in the language; an equivalence query consists of a \emph{hypothesis} pomset recogniser, to which the teacher replies \emph{yes} if it is correct or \emph{no} with a counterexample---a pomset incorrectly classified by the hypothesis---if it is not.

A pomset recogniser is essentially a tree automaton, with the additional constraint that its algebraic structure satisfies the bimonoid axioms.
Our algorithm is therefore relatively close to tree automata learning---in particular Drewes and H\"ogberg~\cite{DrewesH03,drewes2007}---but there are several key differences: we optimise the algorithm by taking advantage of the bimonoid axioms, and at the same time need to ensure that the hypotheses generated by the learning process satisfy those axioms.

\subsection{Observation Table}

We fix a target language $\lang \subseteq \SP$ throughout this section.
As in the original {\LStar} algorithm, the state of the learner throughout a run of the algorithm is given by a data structure called the \emph{observation table}, which collects information about $\lang$.
The table contains rows indexed by pomsets, representing the state reached by the correct pomset recogniser after reading that pomset; and columns indexed by pomset contexts, used to approximately indentify the behaviour of each state.
To represent the additional rows needed to approximate the pomset recogniser structure, we use the following definition.
Given $U \subseteq \SP$, we define
\[
	\ext{U} = \Sigma \cup \{u \cdot v : u, v \in U\} \cup \{u \parallel v : u, v \in U\} \subseteq \SP.
\]

\begin{definition}[Observation table]
	An \emph{observation table} is a pair $\angl{S, E}$, with $S \subseteq \SP$ subpomset-closed and $E \subseteq \PC$ such that $\pempty \in S$ and $\cempty \in E$.
	These sets induce the function $\row_{\angl{S, E}}
			\colon S \cup \ext{S} \to 2^E$:
	\(
			\row_{\angl{S, E}}(s)(e) = 1 
			\iff \cinsert{e}{s} \in \lang.
			\)
	We often write $\row$ instead of $\row_{\angl{S, E}}$ when $S$ and $E$ are clear from the context.
\end{definition}

We depict observation tables, or more precisely $\row$, as two separate tables with rows in $S$ and $\ext{S} \setminus S$ respectively, see for instance \Cref{ex:nonclosed} below.

The goal of the learner is to extract a \emph{hypothesis} pomset recogniser from the rows in the table.
More specifically, the carrier of the underlying bimonoid of the hypothesis will be given by the rows indexed by pomsets in $S$.
The structure on the rows is obtained by transferring the structure of the row labels onto the rows (e.g., $\row(s) \odot \row(t) = \row(s \cdot t)$), but this is not well-defined unless the table satisfies \emph{closedness}, \emph{consistency}, and \emph{associativity}.
Closedness and consistency are standard in \LStar, whereas associativity is a new property specific to bimonoid learning.
We discuss each of these properties next, also including \emph{compatibility}, a property that is used to show minimality of
hypotheses.

The first potential issue is a closedness defect: this is the case when a composed row, indexed by an element of $\ext{S}$, is not indexed by a pomset in $S$.

\begin{example}[Table not closed]\label{ex:nonclosed}\footnotesize
	Recall $\lang = {\{ \ltr{a} \parallel \ltr{b} \}}^*$ from \Cref{ex:loop}, and suppose $S = \{\pempty,\ltr{a}, \ltr{b} \}$ and $E = \{\cempty, \ltr{a} \parallel \cempty, \cempty \parallel \ltr{b}\}$.
	The induced table is
	\begin{center}
		\begin{tabular}{L{3em} R{6ex} | C{3ex} C{6ex} C{6ex}}
			\multicolumn{5}{c}{$\hspace{14ex}\overbracket[.8pt][2pt]{\rule{15ex}{0pt}}^{\displaystyle E}$} \\
			& & \cempty & \ltr{a} \parallel \cempty & \cempty \parallel \ltr{b} \\
			\cline{2-5}
			\ldelim[{3}{8mm}[$S$\,] & \pempty & 1 & 0 & 0 \\
			& \ltr{a} & 0 & 0 & 1 \\
			& \ltr{b} & 0 & 1 & 0
		\end{tabular}
		\qquad
		\begin{tabular}{L{6em} R{6ex} | C{3ex} C{6ex} C{6ex}}
			\multicolumn{5}{c}{$\hspace{21.5ex}\overbracket[.8pt][2pt]{\rule{15ex}{0pt}}^{\displaystyle E}$} \\
			& & \cempty & \ltr{a} \parallel \cempty & \cempty \parallel \ltr{b} \\
			\cline{2-5}
			\ldelim[{7}{16mm}[$S^+ \setminus S$\,]
			& \ltr{aa} & 0 & 0 & 0 \\
			& \ltr{ab} & 0 & 0 & 0 \\
			& \ltr{ba} & 0 & 0 & 0 \\
			& \ltr{bb} & 0 & 0 & 0 \\
			& \ltr{a} \parallel \ltr{a} & 0 & 0 & 0 \\
			& \ltr{a} \parallel \ltr{b} & 1 & 0 & 0 \\
			& \ltr{b} \parallel \ltr{b} & 0 & 0 & 0
		\end{tabular}
	\end{center}
	The carrier of the hypothesis bimonoid is $M = \{\row(\pempty), \row(\ltr{a}), \row(\ltr{b})\}$, but the composition $\row(\ltr{a}) \odot \row(\ltr{a})$ cannot be defined since $\row(\ltr{aa}) \not\in M$.\end{example}

The absence of the issue described above is captured with \emph{closedness}.

\begin{definition}[Closed table]
	An observation table $\angl{S, E}$ is \emph{closed} if for all $t \in \ext{S}$ there exists $s \in S$ such that $\row(s) = \row(t)$.
\end{definition}

Another issue that may occur is that the same row being represented by different index pomsets leads to an inconsistent definition of the structure.
The absence of this issue is referred to as \emph{consistency}.

\begin{definition}[Consistent table]
	An observation table $\angl{S, E}$ is \emph{consistent} if for all $s_1, s_2 \in S$ such that $\row(s_1) = \row(s_2)$ we have for all $t \in S$ that
	\begin{mathpar}
		\row(s_1 \cdot t) = \row(s_2 \cdot t) \and
		\row(t \cdot s_1) = \row(t \cdot s_2) \and
		\row(s_1 \parallel t) = \row(s_2 \parallel t).
	\end{mathpar}
\end{definition}

Whenever closedness and consistency hold, one can define sequential and parallel composition operations on the rows of the table.
However, these operations are not guaranteed to be associative, as we show with the following example.

\begin{example}[Table not associative]\label{ex:nonassoc}
	Consider $\lang = \{\ltr{a}u : u \in {\{\ltr{b}\}}^*\}$ over $\Sigma = \{\ltr{a}, \ltr{b}\}$, and suppose $S = \{\pempty,\ltr{a},\ltr{b}\}$ and $E = \{\cempty, \cempty\ltr{a}\}$.
	The induced table is:
	\begin{center}\footnotesize
		\begin{tabular}{R{6ex} | C{3ex} C{5ex}}
			& \cempty & \cempty\ltr{a} \\
			\hline
			\pempty & 0 & 1\\
			\ltr{a} & 1 & 0\\
			\ltr{b} & 0 & 0
		\end{tabular}
		\qquad\qquad\qquad
		\begin{tabular}{R{6ex} | C{3ex} C{5ex}}
			& \cempty & \cempty\ltr{a} \\
			\hline
			\ltr{aa} & 0 & 0 \\
			\ltr{ab} & 1 & 0 \\
			\ltr{ba} & 0 & 0 \\
			\ltr{bb} & 0 & 0 \\
			\ltr{a} \parallel \ltr{a} & 0 & 0 \\
			\ltr{a} \parallel \ltr{b} & 0 & 0 \\
			\ltr{b} \parallel \ltr{b} & 0 & 0
		\end{tabular}
	\end{center}
	This table does not lead to an associative sequential operation on rows:
	\begin{align*}
		(\row(\ltr{a}) \odot \row(\ltr{b})) \odot \row(\ltr{a}) 
			= \row(\ltr{ab}) \odot \row(\ltr{a}) 
			= \row(\ltr{a}) \odot \row(\ltr{a}) 
			= \row(\ltr{aa}) \\
			\ne \row(\ltr{ab}) 
			= \row(\ltr{a}) \odot \row(\ltr{b}) 
			= \row(\ltr{a}) \odot \row(\ltr{ba}) 
			= \row(\ltr{a}) \odot (\row(\ltr{b}) \odot \row(\ltr{a})).
	\end{align*}
\end{example}

To prevent this issue we enforce the following additional property:

\begin{definition}[Associative table]\label{def:assoctable}
    Let $\asymbol \in \{{\cdot}, {\parallel}\}$.
	An observation table $\angl{S, E}$ is \emph{$\asymbol$-associative} if for all $s_1, s_2, s_3, s_l, s_r \in S$ with $\row(s_l) = \row(s_1 \asymbol s_2)$ and $\row(s_r) = \row(s_2 \asymbol s_3)$ we have $\row(s_l \asymbol s_3) = \row(s_1 \asymbol s_r)$.
	An observation table is \emph{associative} if it is both $\cdot$-associative and $\parallel$-associative.
\end{definition}

The table from \Cref{ex:nonassoc} is \emph{not} $\cdot$-associative: we have $\row(\ltr{a}) = \row(\ltr{ab})$ and $\row(\ltr{b}) = \row(\ltr{ba})$ but $\row(\ltr{aa}) \neq \row(\ltr{ab})$.

Putting the above definitions of closedness, consistency and associativity of tables together, we have the following result
for constructing a hypothesis.

\begin{lemma}[Hypothesis]\label{lem:hyp}
	A closed, consistent and associative table $\angl{S, E}$ induces a \emph{hypothesis} pomset recogniser $\hyp = \angl{H, \odot_H, \obar_H, \unit_H, i_H, F_H}$ where
	\begin{mathpar}
		H = \{\row(s) : s \in S\} \and
			\row(s_1) \odot_H \row(s_2) = \row(s_1 \cdot s_2) \and
			\row(s_1) \obar_H \row(s_2) = \row(s_1 \parallel s_2) \and
			\unit_H = \row(\pempty) \and
			i_H(\ltr{a}) = \row(\ltr{a}) \and
			F_H = \{\row(s) : s \in S, \row(s)(\cempty) = 1\}.
	\end{mathpar}
\end{lemma}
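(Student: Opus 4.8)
The plan is to verify that each component of $\hyp$ is well-defined and that $\langle H, \odot_H, \obar_H, \unit_H\rangle$ satisfies the bimonoid axioms; the maps $i_H$ and $F_H$ are then immediate. The essential subtlety throughout is that the defining equations specify the operations on \emph{representatives} $s_1, s_2 \in S$ of the rows, so the first task is to check that the results are independent of the chosen representatives and that they actually land in $H$.

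First I would establish well-definedness of $\odot_H$ and $\obar_H$. Closedness guarantees that $\row(s_1 \cdot s_2)$ and $\row(s_1 \parallel s_2)$, which are indexed by elements of $\ext{S}$, coincide with $\row(s)$ for some $s \in S$ and hence lie in $H$. For independence of representatives, suppose $\row(s_1) = \row(s_1')$ and $\row(s_2) = \row(s_2')$ with all four in $S$. Applying consistency twice --- once in the left argument with $t = s_2$ and once in the right argument with $t = s_1'$ --- yields $\row(s_1 \cdot s_2) = \row(s_1' \cdot s_2) = \row(s_1' \cdot s_2')$, so $\odot_H$ is well-defined. For $\obar_H$ the same argument works using the parallel clause of consistency in the left argument together with the observation that $u \parallel v$ and $v \parallel u$ denote the same pomset (they are literally the same labelled poset), so $\row(u \parallel v) = \row(v \parallel u)$; this symmetry simultaneously delivers commutativity of $\obar_H$.

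Next come the unit laws. Since $\pempty \in S$ and $\pempty$ is a two-sided unit for both $\cdot$ and $\parallel$ in $\SP$, we get $\row(\pempty) \odot_H \row(s) = \row(\pempty \cdot s) = \row(s)$ and symmetrically on the right, and likewise $\row(\pempty) \obar_H \row(s) = \row(\pempty \parallel s) = \row(s)$; thus $\unit_H = \row(\pempty)$ is a shared unit. The crux is associativity, which is exactly what the associativity property of the table buys. For $\odot_H$, take $s_1, s_2, s_3 \in S$ and choose, by closedness, $s_l, s_r \in S$ with $\row(s_l) = \row(s_1 \cdot s_2)$ and $\row(s_r) = \row(s_2 \cdot s_3)$. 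Unfolding the definitions gives $(\row(s_1) \odot_H \row(s_2)) \odot_H \row(s_3) = \row(s_l \cdot s_3)$ and $\row(s_1) \odot_H (\row(s_2) \odot_H \row(s_3)) = \row(s_1 \cdot s_r)$, and $\cdot$-associativity says precisely that $\row(s_l \cdot s_3) = \row(s_1 \cdot s_r)$. The identical argument using $\parallel$-associativity yields associativity of $\obar_H$, completing the proof that $\langle H, \odot_H, \obar_H, \unit_H\rangle$ is a bimonoid.

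Finally, $i_H(\ltr{a}) = \row(\ltr{a}) \in H$ because $\Sigma \subseteq \ext{S}$ and the table is closed, and $F_H$ is a well-defined subset of $H$ since membership only depends on the value $\row(s)(\cempty)$, which is a property of the row itself rather than of its representative. The main obstacle is associativity: unlike closedness and consistency, which are the familiar $\LStar$ conditions, it relies on the dedicated table property, and one must take care that the intermediate representatives $s_l, s_r$ are chosen arbitrarily --- the well-definedness established in the second step is what ensures the computation is insensitive to that choice.
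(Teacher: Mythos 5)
Your proof is correct and follows essentially the same route as the paper's: well-definedness of the operations from closedness and consistency, commutativity and the unit laws inherited from $\parallel$ and $\pempty$, and associativity of $\odot_H$ and $\obar_H$ from the table's associativity property rather than from associativity of pomset composition. You simply spell out the representative-independence and the $s_l, s_r$ bookkeeping in more detail than the paper does.
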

\begin{proof}
	The operations $\odot_H$ and $\obar_H$ are well-defined by closedness and consistency, and $\unit_H$ is well-defined because $\pempty \in S$ by the observation table definition.
	Commutativity of $\obar_H$ follows from commutativity of $\parallel$,
	and similarly that $\unit_H$ is a unit for both operations follows from $\pempty$
	being a unit. Associativity follows by associativity of the table (it does \emph{not} follow from $\cdot$ and $\parallel$ being associative:
	given elements $s_1, s_2, s_3 \in S$, $s_1 \cdot s_2 \cdot s_3$ is not necessarily present in $S \cup \ext{S}$).
	\qed%
\end{proof}

Since a hypothesis is constructed from an observation table $\angl{S, E}$ that records for given $s \in S$ and $e \in E$ whether $\cinsert{e}{s}$ is accepted by the language or not, one would expect that the hypothesis classifies those pomsets
\[
	T_{\angl{S, E}} = \{\cinsert{e}{s} : s \in S, e \in E\}
\]
correctly.
This is not necessarily the case, as we show in the following example.

\begin{example}
	Consider the language $\lang$ from \Cref{ex:nested}, and let $S = \{\pempty,\ltr{b}\}$ and $E = \{\cempty, \ltr{a}(\cempty \parallel \ltr{b})\}$.
	The induced table is
	\begin{center}
		\begin{tabular}{R{6ex} | C{3ex} C{10ex}}
			& \cempty & \ltr{a}(\cempty \parallel \ltr{b}) \\
			\hline
			\pempty & 0 & 0\\
			\ltr{b} & 1 & 1
		\end{tabular}
		\qquad\qquad\qquad
		\begin{tabular}{R{6ex} | C{3ex} C{10ex}}
			& \cempty & \ltr{a}(\cempty \parallel \ltr{b}) \\
			\hline
			\ltr{a} & 0 & 0\\
			\ltr{bb} & 0 & 0 \\
			\ltr{b} \parallel \ltr{b} & 0 & 0
		\end{tabular}
	\end{center}
	From this closed, consistent, and associative table we obtain a hypothesis pomset recogniser that satisfies
	\begin{align*}
		(\row(\ltr{a}) \odot (\row(\ltr{b}) \obar \row(\ltr{b})))(\cempty) &
			= (\row(\ltr{a}) \odot \row(\ltr{b} \parallel \ltr{b}))(\cempty) \\
					&			= (\row(\ltr{a}) \odot \row(\pempty))(\cempty)
			= \row(\ltr{a})(\cempty)			= 0 \ne 1
	\end{align*}
	and thus recognises a language that differs from $\lang$ on $\ltr{a \cdot (b \parallel b)} \in T_{\angl{S, E}}$.
\end{example}

We thus have the following definition, parametric in a subset of $T_{\angl{S, E}}$.

\begin{definition}[Compatible hypothesis]
	A closed, consistent, and associative observation table $\angl{S, E}$ induces a hypothesis $\hyp$ that is \emph{$X$-compatible} with its table, for $X \subseteq \SP$, if for $x \in X$ we have $x \in \lang_\hyp \iff x \in \lang$.
	We say that the hypothesis is \emph{compatible} with its table if it is $T_{\angl{S, E}}$-compatible with its table.
\end{definition}

Ensuring hypotheses are compatible with their table will not be a crucial step in proving termination, but plays a key role in ensuring minimality (\Cref{sec:minimality}). 
This was originally shown by van Heerdt~\cite{heerdt2014} for Mealy machines.
\begin{figure}[ht!]
	\begin{minipage}{\textwidth}
		\begin{codebox}
			\li $S \gets \{\pempty\}$, $E \gets \{\cempty\}$
			\li \Repeat \label{line:loop1}
				\li \Repeat \label{line:loop2}
					\li \While $\angl{S, E}$ is not closed or not associative \label{line:loop3}
						\li \Do \If $\angl{S, E}$ is not closed
							\li\label{line:closedness} \Then find $t \in \ext{S}$ such that $\row(t) \neq \row(s)$ for all $s \in S$
							\li $S \gets S \cup \{t\}$ \label{line:addtoS}
						\End
						\li \For $\asymbol \in \{{\cdot}, {\parallel}\}$
							\li \Do \If $\angl{S, E}$ is not $\asymbol$-associative
								\li\label{line:sassoc} \Then find $s_1, s_2, s_3, s_l, s_r \in S$ and $e \in E$ such that
                                \zi\qquad $\row(s_l) = \row(s_1 \asymbol s_2)$,
								\zi\qquad $\row(s_r) = \row(s_2 \asymbol s_3)$, and
                                \zi\qquad $\row(s_l \asymbol s_3)(e) \ne \row(s_1 \asymbol s_r)(e)$
								\li let $b$ be the result of a membership query on $s_1 \asymbol s_2 \asymbol s_3$
								\li \If $\row(s_l \asymbol s_3)(e) \ne b$ \Then
									\li $E \gets E \cup \{\cinsert{e}{\cempty \asymbol s_3}\}$
								\li \Else
									\li $E \gets E \cup \{\cinsert{e}{s_1 \asymbol \cempty}\}$
								\End
							\End
						\End
					\End
					\li\label{line:hypothesis}construct the hypothesis $\hyp$ for $\angl{S, E}$
					\li \If $\hyp$ is not compatible with its table
						\li \Then find $s \in S$ and $e \in E$ such that $\cinsert{e}{s} \in \lang_\hyp \iff \cinsert{e}{s} \not\in \lang$
						\li $E \gets E \cup \{\proc{HandleCounterexample}(S, E, \cinsert{e}{s}, \cempty)\}$
					\End
				\li \Until $\hyp$ is compatible with its table
				\li \If the teacher replies \emph{no} to $\hyp$, with a counterexample $z$
					\li \Then $E \gets E \cup \{\proc{HandleCounterexample}(S, E, z, \cempty)\}$
				\End
			\li \Until the teacher replies \emph{yes}
			\li \Return $\hyp$
		\end{codebox}
		\begin{codebox}
			\Procname{$\proc{HandleCounterexample}(S, E, z, c)$}
			\li \If $z \in S \cup \ext{S}$
				\li \Then let $s \in S$ be such that $\row(s) = \row(z)$
				\li \If $\cinsert{c}{s} \in \lang \iff \cinsert{c}{z} \in \lang$
					\li \Then \Return $s$
				\li \Else
					\li \Return $c$
				\End
			\End
			\li let non-empty $u_1, u_2 \in \SP$ and $\asymbol \in \{{\cdot}, {\parallel}\}$ be such that $u_1 \asymbol u_2 = z$
			\li $u_1 \gets \proc{HandleCounterexample}(S, E, u_1, \cinsert{c}{\cempty \asymbol u_2})$
			\li \If $u_1 \not\in S$
				\li \Then \Return $u_1$
			\End
			\li $u_2 \gets \proc{HandleCounterexample}(S, E, u_2, \cinsert{c}{u_1 \asymbol \cempty})$
			\li \If $u_2 \not\in S$
				\li \Then \Return $u_2$
			\End
			\li \Return $\proc{HandleCounterexample}(S, E, u_1 \asymbol u_2, c)$
		\end{codebox}
	\end{minipage}
    \captionof{algorithm}{The pomset recogniser learning algorithm.}\label{fig:plstar}
\end{figure}

\subsection{The Learning Algorithm}

We are now ready to introduce our learning algorithm, \Cref{fig:plstar}.
The main algorithm initialises the table to $\angl{\{\pempty\}, \{\cempty\}}$ and starts by augmenting the table to make sure it is closed and associative.
We give an example below.

\begin{example}[Fixing closedness and associativity]\label{ex:fixes}
	Consider the table from \Cref{ex:nonclosed}, where $\row(\ltr{aa}) \not\in \{\row(\pempty), \row(\ltr{a}), \row(\ltr{b})\}$ witnesses a closedness defect.
	To fix this, the algorithm would add $\ltr{aa}$ to the set $S$, which means $\row(\ltr{aa})$ will become part of the carrier of the hypothesis.

	Now consider the table from \Cref{ex:nonassoc}.
	Here we found an associativity defect witnessed by $\row(\ltr{a}) = \row(\ltr{ab})$ and $\row(\ltr{b}) = \row(\ltr{ba})$ but $\row(\ltr{aa}) \neq \row(\ltr{ab})$.
	More specifically, $\row(\ltr{aa})(\cempty) \neq \row(\ltr{ab})(\cempty)$.
	Thus, $s_1 = s_3 = s_l = \ltr{a}$, $s_2 = s_r = \ltr{b}$, $s_l = \ltr{a}$, and $e = \cempty$.
	A membership query on $\ltr{aba}$ shows $\ltr{aba} \not\in \lang$, so $b = 0$.
	We have $\row(\ltr{aa})(\cempty) = 0$, and therefore the algorithm would add the context $\cinsert{\cempty}{\ltr{a} \cdot \cempty} = \ltr{a} \cdot \cempty$ to $E$.
\end{example}
Note that the algorithm does not explicitly check for consistency; this is because we actually ensure a stronger property---sharpness~\cite{barlocco2018}---as an invariant (\Cref{lem:nonce}). This property ensures every row indexed by a pomset in $S$ is indexed by exactly one pomset in $S$ (implying consistency):
\begin{definition}[Sharp table]
	An observation table $\angl{S, E}$ is \emph{sharp} if for all $s_1, s_2 \in S$ such that $\row(s_1) = \row(s_2)$ we have $s_1 = s_2$.
\end{definition}

The idea of maintaining sharpness is due to Maler and Pnueli~\cite{maler1995}.

Once the table is closed and associative, we construct the hypothesis and check if it is compatible with its table.
If this is not the case, a witness for incompatibility is a counterexample by definition, so $\proc{HandleCounterexample}$ is invoked to extract an extension of $E$, and we return to checking closedness and associativity.
Once we obtain a hypothesis that is compatible with its table, we submit it to the teacher to check for equivalence with the target language.
If the teacher provides a counterexample, we again process this and return to checking closedness and associativity.
Once we have a compatible hypothesis for which there is no counterexample, we return this correct pomset recogniser.

The procedure $\proc{HandleCounterexample}$, adapted from~\cite{DrewesH03,drewes2007}, is provided with an observation table $\angl{S, E}$ a pomset $z$, and a context $c$ and finds a single context to add to $E$.
The main invariant is that $\cinsert{c}{z}$ is a counterexample.
Recursive calls replace subpomsets from $\ext{S}$ with elements of $S$ in this counterexample while maintaining the invariant.
There are two types of return values: if $c$ is a suitable context, $c$ is returned; otherwise the return value is an element of $S$ that is to replace $z$.
The context $c$ is suitable if $z \in \ext{S}$ and adding $c$ to $E$ would distinguish $\row(s)$ from $\row(z)$, where $s \in S$ is such that currently $\row(s) = \row(z)$.
Because $S$ is non-empty and subpomset-closed, if $z \not\in S \cup \ext{S}$ it can be decomposed into $z = u_1 \asymbol u_2$ for non-empty $u_1, u_2 \in \SP$ and $\asymbol \in \{{\cdot}, {\parallel}\}$.
We then recurse into $u_1$ and $u_2$ to replace them with elements of $S$ and replace $z$ with $u_1 \asymbol u_2 \in \ext{S}$ in a final recursive call.
If $c = \cempty$, the return value cannot be in $S$, as we will show in \Cref{lem:nonce} that these elements are not counterexamples.

\begin{example}[Processing a counterexample]\label{ex:counterexample}
	Consider $\lang = \{\ltr{a}, \ltr{aa}, \ltr{a} \parallel \ltr{a}\}$, and let $S = \{\pempty, \ltr{a}\}$ and $E = \{\cempty\}$.
	This induces a closed, sharp, and associative table
	\begin{center}
		\begin{tabular}{R{6ex} | C{3ex}}
			& \cempty \\
			\hline
			\pempty & 0 \\
			\ltr{a} & 1
		\end{tabular}
		\qquad\qquad\qquad
		\begin{tabular}{R{6ex} | C{3ex}}
			& \cempty \\
			\hline
			\ltr{aa} & 1 \\
			\ltr{a} \parallel \ltr{a} & 1
		\end{tabular}
	\end{center}
	Suppose an equivalence query on its pomset recogniser, which rejects only the empty pomset, gives counterexample $z = \ltr{a} \parallel \ltr{a} \parallel \ltr{aa}$.
	We may decompose $z$ as $\cinsert{(\cempty \parallel \ltr{aa})}{\ltr{a} \parallel \ltr{a}}$, where $\ltr{a} \parallel \ltr{a} \in \ext{S} \setminus S$.
	Because $\row(\ltr{a} \parallel \ltr{a}) = \row(\ltr{a})$, $\cinsert{(\cempty \parallel \ltr{aa})}{\ltr{a}} = \ltr{a} \parallel \ltr{aa}$, and $\ltr{a} \parallel \ltr{aa} \in \lang \iff z \in \lang$, we update $z = \ltr{a} \parallel \ltr{aa}$ and repeat the process.
	Now we decompose $z = \cinsert{(\ltr{a} \parallel \cempty)}{\ltr{aa}}$.
	Since $\row(\ltr{aa}) = \row(\ltr{a})$, $\cinsert{(\ltr{a} \parallel \cempty)}{\ltr{a}} = \ltr{a} \parallel \ltr{a}$, and $\ltr{a} \parallel \ltr{a} \in \lang \iff z \not\in \lang$, we finish by adding $\ltr{a} \parallel \cempty$ to $E$.
\end{example}

\subsection{Termination and Query Complexity}

Our termination argument is based on a comparison of the current observation table with the infinite table $\angl{\SP, \PC}$.
We first show that the latter induces a hypothesis, called the \emph{canonical pomset recogniser} for the language.
Its underlying bimonoid is isomorphic to the syntactic bimonoid~\cite{lodaya-weil-2000} for the language.

\begin{lemma}\label{lem:fullhyp}
	$\angl{\SP, \PC}$ is a closed, consistent, and associative observation table.
\end{lemma}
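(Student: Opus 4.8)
The plan is to unfold every property directly from the definition of $\row$ on the full table, where $\row(u) = \row(v)$ holds exactly when $u$ and $v$ are \emph{contextually equivalent}, i.e.\ $\cinsert{c}{u} \in \lang \iff \cinsert{c}{v} \in \lang$ for every $c \in \PC$. Before checking the three properties I would confirm that $\angl{\SP, \PC}$ is a legitimate observation table: $\pempty \in \SP$ and $\cempty \in \PC = \SP(\Sigma \cup \{\cempty\})$ by construction, and $\SP$ is subpomset-closed because a subpomset of a series-parallel pomset is again series-parallel (a standard fact, e.g.\ subposets of N-free posets are N-free). The single technical ingredient that drives the rest is a \emph{context-composition law}: for a context $e \in \PC$ and a fixed $t \in \SP$, the object $\cinsert{e}{\cempty \asymbol t}$ (substitute $\cempty \asymbol t$ for the hole of $e$) is again a context, and plugging $s$ into it satisfies $\cinsert{\cinsert{e}{\cempty \asymbol t}}{s} = \cinsert{e}{s \asymbol t}$, and symmetrically for $\cinsert{e}{t \asymbol \cempty}$. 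I would isolate this as a small substitution lemma, since all three properties are instances of it.

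Closedness is immediate. Because $\SP$ contains every letter and is closed under $\cdot$ and $\parallel$, we have $\ext{\SP} \subseteq \SP$; hence for any $t \in \ext{\SP}$ we may simply take $s = t \in \SP$, for which $\row(s) = \row(t)$ trivially.

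For consistency, suppose $\row(s_1) = \row(s_2)$ and fix $t \in \SP$. To prove $\row(s_1 \cdot t) = \row(s_2 \cdot t)$ I evaluate at an arbitrary $e \in \PC$ and set $c = \cinsert{e}{\cempty \cdot t}$; the composition law gives $\cinsert{c}{s_i} = \cinsert{e}{s_i \cdot t}$, so contextual equivalence of $s_1$ and $s_2$ applied to $c$ yields $\cinsert{e}{s_1 \cdot t} \in \lang \iff \cinsert{e}{s_2 \cdot t} \in \lang$. The equalities $\row(t \cdot s_1) = \row(t \cdot s_2)$ and $\row(s_1 \parallel t) = \row(s_2 \parallel t)$ follow verbatim using the contexts $\cinsert{e}{t \cdot \cempty}$ and $\cinsert{e}{\cempty \parallel t}$.

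The associativity clause is the genuinely new part and where I expect the only real subtlety, since (as the authors warn in \Cref{lem:hyp}) it does \emph{not} come for free from associativity of the carrier operations. Fix $\asymbol \in \{\cdot, \parallel\}$ and $s_1, s_2, s_3, s_l, s_r \in \SP$ with $\row(s_l) = \row(s_1 \asymbol s_2)$ and $\row(s_r) = \row(s_2 \asymbol s_3)$, and evaluate $\row(s_l \asymbol s_3)$ and $\row(s_1 \asymbol s_r)$ at an arbitrary $e \in \PC$. Using $c = \cinsert{e}{\cempty \asymbol s_3}$ and the first hypothesis, $\cinsert{e}{s_l \asymbol s_3} \in \lang \iff \cinsert{e}{(s_1 \asymbol s_2) \asymbol s_3} \in \lang$. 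Associativity of $\asymbol$ in the free bimonoid $\SP$ rewrites the right-hand pomset as $\cinsert{e}{s_1 \asymbol (s_2 \asymbol s_3)}$. Finally, using $c' = \cinsert{e}{s_1 \asymbol \cempty}$ and the second hypothesis, $\cinsert{e}{s_1 \asymbol (s_2 \asymbol s_3)} \in \lang \iff \cinsert{e}{s_1 \asymbol s_r} \in \lang$. Chaining these three biconditionals gives $\row(s_l \asymbol s_3)(e) = \row(s_1 \asymbol s_r)(e)$, and since $e$ was arbitrary, $\row(s_l \asymbol s_3) = \row(s_1 \asymbol s_r)$. The main point to get right is that the substitution law is applied on the correct side and that the associativity rewriting happens \emph{inside} the fixed outer context $e$; once the substitution lemma is stated cleanly, both the consistency and associativity arguments are routine.
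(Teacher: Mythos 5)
Your proof is correct and follows essentially the same route as the paper's: closedness is trivial since $\ext{\SP} \subseteq \SP$, and consistency is established via exactly the substitution law $\cinsert{\cinsert{e}{\cempty \asymbol t}}{s} = \cinsert{e}{s \asymbol t}$ that the paper uses in its chain of equivalences. The only presentational difference is that the paper dispatches associativity with the remark that it follows from consistency (since in the full table all intermediate pomsets such as $s_1 \asymbol s_2$ lie in $S = \SP$), whereas you inline that derivation --- two applications of contextual equivalence plus associativity of $\asymbol$ on pomsets --- which is the same computation spelled out.
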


\begin{definition}[Canonical pomset recogniser]
	The \emph{canonical pomset re\-cogniser} for $\lang$ is the the hypothesis for the observation table $\angl{\SP, \PC}$.
	We denote this hypothesis by $\angl{M_\lang, \odot_\lang, \obar_\lang, \unit_\lang, i_\lang, F_\lang}$.
\end{definition}

The comparison of the current table with $\angl{\SP, \PC}$ is in terms of the number of distinct rows they hold.
In the following lemma we show that the number of the former is bounded by the number of the latter.

\begin{lemma}\label{lem:bound}
	If $M_\lang$ is finite, any observation table $\angl{S, E}$ satisfies
	\[
		|\{\row(s) : s \in S\}| \le |M_\lang|.
	\]
\end{lemma}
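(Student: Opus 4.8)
The plan is to compare the rows of the current table directly against those of the full table $\angl{\SP, \PC}$, whose distinct rows are by definition exactly the carrier $M_\lang$ of the canonical pomset recogniser. The essential observation is that, since $S \subseteq \SP$ and $E \subseteq \PC$, for every $s \in S$ the row $\row_{\angl{S, E}}(s)$ is simply the restriction of $\row_{\angl{\SP, \PC}}(s)$ to the columns in $E$: for $e \in E$ we have $\row_{\angl{S, E}}(s)(e) = 1 \iff \cinsert{e}{s} \in \lang \iff \row_{\angl{\SP, \PC}}(s)(e) = 1$. Intuitively, restricting to fewer columns can only identify rows, never separate them, so the current table has at most as many distinct rows as the full one.

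Concretely, I would fix the two assignments $s \mapsto \row_{\angl{S, E}}(s)$ and $s \mapsto \row_{\angl{\SP, \PC}}(s)$ on the common domain $S$, and write their images as $R = \{\row_{\angl{S, E}}(s) : s \in S\}$ and $R^\sharp = \{\row_{\angl{\SP, \PC}}(s) : s \in S\}$. By the observation above, whenever $\row_{\angl{\SP, \PC}}(s) = \row_{\angl{\SP, \PC}}(s')$ for $s, s' \in S$, we also have $\row_{\angl{S, E}}(s) = \row_{\angl{S, E}}(s')$. Hence the assignment sending $\row_{\angl{\SP, \PC}}(s)$ to $\row_{\angl{S, E}}(s)$ is a well-defined function $R^\sharp \to R$, and it is surjective by construction; therefore $|R| \le |R^\sharp|$.

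To finish, note that $R^\sharp \subseteq M_\lang$, because $S \subseteq \SP$ and, by the definition of the canonical pomset recogniser, $M_\lang = \{\row_{\angl{\SP, \PC}}(u) : u \in \SP\}$. Chaining the two inequalities yields $|R| \le |R^\sharp| \le |M_\lang|$, which is precisely the claim; the hypothesis that $M_\lang$ is finite is what makes this bound meaningful.

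The step needing the most care is getting the \emph{direction} of the refinement right. It is tempting to try to embed the current rows into $M_\lang$ directly, but a single restricted row in $R$ may arise from several distinct full rows, so no such injection exists in general. The correct move is the surjection $R^\sharp \twoheadrightarrow R$ in the opposite direction, which reflects the fact that the larger column set $\PC$ induces a finer partition of $S$ than $E$ does. Everything else is a routine restriction argument, and notably no appeal to closedness, consistency, or associativity of $\angl{S, E}$ is required, since the bound is purely about counting distinct rows.
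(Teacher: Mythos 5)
Your proof is correct and takes essentially the same route as the paper: both rest on the observation that restricting from $\PC$ to $E \subseteq \PC$ can only merge rows, so distinct rows of $\angl{S,E}$ come from distinct rows of $\angl{\SP,\PC}$, which form a subset of $M_\lang$. The only cosmetic difference is that you package this as a surjection $R^\sharp \twoheadrightarrow R$, whereas the paper states the contrapositive (distinct restricted rows have distinct full rows, which in fact does yield an injection $R \hookrightarrow R^\sharp$ by choosing representatives); your aside that no such injection exists is slightly overstated, but harmless.
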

\begin{proof}
	Note that $M_\lang = \{\row_{\angl{\SP, \PC}}(s) : s \in S\}$.
	Given $s_1, s_2 \in S$ such that $\row_{\angl{S, E}}(s_1) \ne \row_{\angl{S, E}}(s_2)$ we have $\row_{\angl{\SP, \PC}}(s_1) \ne \row_{\angl{\SP, \PC}}(s_2)$.
	This implies $|\{\row(s) : s \in S\}| \le |M_\lang|$.
	\qed%
\end{proof}

An important fact will be that none of the pomsets in $S$ can form a counterexample for the hypothesis of a table $\angl{S, E}$.
In order to show this we will first show that the hypothesis is always \emph{reachable}, a concept we define for arbitrary pomset recognisers below.

\begin{definition}[Reachability]
	A pomset recogniser $\mathcal{R} = \angl{M, \odot, \obar, \unit, i, F}$ is \emph{reachable} if for all $m \in M$ there exists $u \in \SP$ such that $\free{i}(u) = m$.
\end{definition}

Our reachability lemma relies on the fact that $S$ is subpomset-closed.

\begin{lemma}[Hypothesis reachability]\label{lem:reachability}
	Given a closed, consistent, and associative observation table $\angl{S, E}$, the hypothesis it induces is reachable.
	In particular, $\free{i_H}(s) = \row(s)$ for any $s \in S$.
\end{lemma}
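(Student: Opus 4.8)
The plan is to prove the ``in particular'' statement first, since reachability then follows immediately: because $H = \{\row(s) : s \in S\}$, every $m \in H$ is of the form $\row(s)$ for some $s \in S$, so once we know $\free{i_H}(s) = \row(s)$ we may take $u = s$ as the witness with $\free{i_H}(u) = m$. Thus the whole lemma reduces to the identity $\free{i_H}(s) = \row(s)$ for $s \in S$.

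To establish this identity I would argue by induction on the structure of the sp-pomset $s$, following the inductive definition of $\SP$ (equivalently, induction on the number of carrier elements of $s$). For the base cases: when $s = \pempty$ we use that $\free{i_H}$ is a bimonoid homomorphism, so it sends the unit $\pempty$ of the free bimonoid $\SP$ to the unit $\unit_H = \row(\pempty)$ of $\hyp$; and when $s = \ltr{a}$ is a single letter, $\free{i_H}(\ltr{a}) = i_H(\ltr{a}) = \row(\ltr{a})$, since $\free{i_H}$ restricts to $i_H$ on $\Sigma$ by the defining property of the free bimonoid.

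For the inductive step, suppose $s \in S$ has at least two carrier elements, so that $s = u \asymbol v$ for some nonempty $u, v \in \SP$ and $\asymbol \in \{{\cdot}, {\parallel}\}$. The key observation is that $u$ and $v$ are subpomsets of $s$ (the carrier inclusions are embeddings), so subpomset-closedness of $S$ yields $u, v \in S$; this is precisely where that hypothesis on $S$ is used, and it keeps the induction self-contained within $S$. Treating the case $\asymbol = {\cdot}$ (the parallel case is symmetric), we combine the homomorphism property of $\free{i_H}$, the induction hypothesis applied to $u$ and $v$, and the defining equation of \Cref{lem:hyp}:
\[
	\free{i_H}(u \cdot v) = \free{i_H}(u) \odot_H \free{i_H}(v) = \row(u) \odot_H \row(v) = \row(u \cdot v).
\]

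The argument is largely routine; the one point to handle with care is the inductive step, namely that every $s \in S$ with at least two carrier elements genuinely decomposes as $u \asymbol v$ with both parts nonempty, and that both parts then lie in $S$. The first follows from the inductive characterisation of sp-pomsets (a non-empty, non-singleton sp-pomset is obtained by a sequential or parallel composition, and composing with $\pempty$ can be discarded), and the second from subpomset-closedness. I expect no genuine obstacle from the non-uniqueness of this decomposition: since $\odot_H$ and $\obar_H$ are already well-defined by \Cref{lem:hyp} (via closedness and consistency), any choice of decomposition yields the same value, so the computation above goes through regardless of how $s$ is split.
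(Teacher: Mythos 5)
Your proposal is correct and follows essentially the same route as the paper's proof: an induction over the pomsets in $S$ with base cases $\pempty$ and $\ltr{a}\in\Sigma$, using subpomset-closedness to keep the decomposition $s = u \asymbol v$ inside $S$ and the defining equations of the hypothesis for the inductive step. You are merely more explicit than the paper about why every non-atomic $s \in S$ admits such a decomposition with both parts in $S$ and why the choice of decomposition is immaterial, which are worthwhile clarifications but not a different argument.
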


From the above it follows that we always have compatibility with respect to the set of row indices, as we show next.

\begin{lemma}\label{lem:nonce}
	The hypothesis of any closed, consistent, and associative observation table $\angl{S, E}$ is $S$-compatible.
\end{lemma}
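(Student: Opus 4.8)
The plan is to reduce the entire statement to the reachability lemma (\Cref{lem:reachability}), which supplies the crucial identity $\free{i_H}(s) = \row(s)$ for every $s \in S$. Once that identity is available, membership of a row index $s$ in the hypothesis language collapses into a statement about the value of its own row at the trivial context $\cempty$, and everything else is definitional unwinding. So the argument splits into one appeal to a previous result and three short equivalences.

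First I would fix an arbitrary $s \in S$ and unfold the definition of the hypothesis language: $s \in \lang_\hyp$ holds exactly when $\free{i_H}(s) \in F_H$. Applying \Cref{lem:reachability} rewrites $\free{i_H}(s)$ as $\row(s)$, so the condition becomes $\row(s) \in F_H$. This is the only step that uses closedness, consistency, and associativity, and it does so only through reachability; the remaining steps need no further appeal to those properties.

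Next I would analyse membership in $F_H = \{\row(s') : s' \in S,\ \row(s')(\cempty) = 1\}$. Since $\cempty \in E$, the quantity $\row(s')(\cempty)$ is determined by the row $\row(s')$ alone and is independent of the particular index $s'$ chosen to represent it. Hence $\row(s) \in F_H$ is equivalent to $\row(s)(\cempty) = 1$: for the forward direction any witnessing $s'$ has $\row(s') = \row(s)$ and therefore the same value at $\cempty$, and for the backward direction $s$ itself serves as a witness. This is purely the definition of $F_H$ as a set of rows, so no consistency argument is required here.

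Finally I would invoke the definition of $\row$ together with the fact that $\cinsert{\cempty}{s} = s$, since substituting $s$ into the bare placeholder yields $s$ unchanged; thus $\row(s)(\cempty) = 1 \iff s \in \lang$. Chaining the three equivalences gives $s \in \lang_\hyp \iff s \in \lang$ for every $s \in S$, which is exactly $S$-compatibility. I do not expect any real obstacle: the entire mathematical content resides in \Cref{lem:reachability}, and once its conclusion $\free{i_H}(s) = \row(s)$ is in hand the rest is a direct chase through the definitions of $\lang_\hyp$, $F_H$, and $\row$.
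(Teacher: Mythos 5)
Your proposal is correct and follows exactly the same route as the paper's proof: invoke \Cref{lem:reachability} to rewrite $\free{i_H}(s)$ as $\row(s)$, then chase the definitions of $F_H$ and $\row$ through the chain $s \in \lang_\hyp \iff \row(s) \in F_H \iff \row(s)(\cempty) = 1 \iff s \in \lang$. The extra care you take in justifying that membership in $F_H$ depends only on the row (not its index) is a fine, if slightly more verbose, elaboration of the same argument.
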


Before turning to our termination proof, we show that some simple properties hold throughout a run of the algorithm.

\begin{lemma}[Invariant]\label{lm:invariant}
	Throughout execution of \Cref{fig:plstar}, we have that $\angl{S, E}$ is a sharp observation table.
\end{lemma}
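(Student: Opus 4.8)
The plan is to prove both halves of the statement at once — that $\langle S, E\rangle$ is a valid observation table and that it is sharp — by induction on the number of updates the algorithm performs on $S$ or $E$. First I would enumerate the mutation sites. Besides the initialisation $S \gets \{\pempty\}$, $E \gets \{\cempty\}$, there is exactly one assignment that grows $S$ (the closedness-repair step $S \gets S \cup \{t\}$, where the preceding line selects $t \in \ext{S}$ with $\row(t) \neq \row(s)$ for all $s \in S$), and four that grow $E$ (the two additions $\cinsert{e}{\cempty \asymbol s_3}$ and $\cinsert{e}{s_1 \asymbol \cempty}$ in the associativity fix, and the two additions of the $\proc{HandleCounterexample}$ return value, once for an incompatible hypothesis and once for a teacher counterexample). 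The crucial bookkeeping point is that $\proc{HandleCounterexample}$ never itself assigns to $S$ or $E$: it is a pure function of the table whose result is inserted into $E$ by the caller. The base case is immediate, since $\langle \{\pempty\}, \{\cempty\}\rangle$ has $\pempty \in S$, $\cempty \in E$, is subpomset-closed ($\pempty$'s only subpomset is $\pempty$), and is sharp because $S$ is a singleton.

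For the inductive step I would treat the two kinds of update separately, and I expect sharpness itself to be the soft part. An $E$-update leaves $S$ unchanged, so subpomset-closedness and $\pempty \in S$ are inherited; one only needs $\cempty \in E$ to persist (clear, as we only ever add to $E$) and each new element to lie in $\PC$, which holds because $\cinsert{e}{\cempty \asymbol s_3}$ and $\cinsert{e}{s_1 \asymbol \cempty}$ plug a single-hole term into the hole of $e \in \PC$, and the $\proc{HandleCounterexample}$ returns (called with initial context $\cempty$) are built the same way and so remain single-hole contexts. Sharpness is preserved under $E$-growth because enlarging the column set only refines rows: if $s_1 \neq s_2$ in $S$, the induction hypothesis gives some $e \in E$ separating them, and that same $e$ still separates them over the larger $E$. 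For the $S$-update, the columns $E$ are unchanged, so all existing row values are unchanged and remain pairwise distinct, while $t$ is chosen precisely so that $\row(t) \neq \row(s)$ for every $s \in S$; hence the rows of $S \cup \{t\}$ are still pairwise distinct and sharpness holds.

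The hard part will be the remaining structural obligation: that $S \cup \{t\}$ is again subpomset-closed when the closedness-repair adjoins $t \in \ext{S}$. I would argue from the shape of $\ext{S}$: either $t \in \Sigma$ is a letter, whose only subpomsets are $\pempty$ and $t$ itself, or $t = u \asymbol v$ for some $u, v \in S$ and $\asymbol \in \{\cdot, \parallel\}$, in which case the decomposition components $u, v$ witnessing the constituents of $t$ are already in $S$ and their own subpomsets are in $S$ by the induction hypothesis. This is the step where the precise reading of "subpomset-closed" must be pinned down — it is closure under the constituents arising in sp-decompositions, exactly the notion exploited when $\proc{HandleCounterexample}$ splits $z = u_1 \asymbol u_2$ and recurses into the two parts — and I would isolate this as a short auxiliary observation (adjoining a single element of $\ext{S}$ to a subpomset-closed $S$ keeps it subpomset-closed) to be invoked uniformly each time $S$ grows. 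Combining this structural preservation with the membership of every new column in $\PC$ and the two sharpness arguments closes the induction.
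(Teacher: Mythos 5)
Your proof is correct and follows essentially the same route as the paper's: the initial table is trivially sharp and subpomset-closed, the only growth of $S$ is the closedness repair whose new row is by construction distinct from all existing ones, and adjoining a single element of $\ext{S}$ preserves subpomset-closedness. You additionally spell out two points the paper leaves implicit (that enlarging $E$ only refines rows and hence cannot destroy sharpness, and that the added columns are genuine one-hole contexts), which is a harmless strengthening rather than a deviation.
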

\begin{proof}
	Subpomset-closedness holds throughout each run since $\{\pempty\}$ is subpomset-closed and adding a single element of $\ext{S}$ to $S$ preserves the property.

	For sharpness, first note that the initial table is sharp as it only has one row.
	Sharpness of $\angl{S, E}$ can only be violated when adding elements to $S$. But
	the only place where this happens is on line~\ref{line:addtoS}, and there
	the new row is unequal to all previous rows, which means sharpness is preserved.
	\qed%
\end{proof}

The preceding results allow us to prove our termination theorem.

\begin{theorem}[Termination]
	If $M_\lang$ is finite, then \Cref{fig:plstar} terminates.
\end{theorem}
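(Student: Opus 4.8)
The plan is to bound the three nested loops of \Cref{fig:plstar} using two monotone quantities that are both capped by $|M_\lang|$: the size $|S|$ of the row-index set, and the number $\Psi(S,E) = |\{\row(s) : s \in S \cup \ext{S}\}|$ of distinct rows over $S \cup \ext{S}$. Both never decrease during a run, since we only ever add elements to $S$ and columns to $E$, and neither operation can merge two previously distinct rows. That $|S| \le |M_\lang|$ follows immediately by combining \Cref{lem:bound} with the sharpness invariant of \Cref{lm:invariant}: sharpness gives $|S| = |\{\row(s) : s \in S\}|$, which \Cref{lem:bound} bounds by $|M_\lang|$. That $\Psi(S,E) \le |M_\lang|$ follows from the proof of \Cref{lem:bound}: the partition of $S \cup \ext{S}$ induced by $\row_{\angl{S,E}}$ is always coarser than the one induced by $\row_{\angl{\SP,\PC}}$, and the latter has at most $|M_\lang|$ classes.

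First I would bound the inner \textbf{while} loop. Each closedness fix adds to $S$ an element whose row differs from every existing row in $S$, so it strictly increases $|S|$; since $|S|$ never decreases and is capped by $|M_\lang|$, there are at most $|M_\lang|-1$ closedness fixes over the entire run. Each associativity fix adds a column of the form $\cinsert{e}{\cempty \asymbol s_3}$ or $\cinsert{e}{s_1 \asymbol \cempty}$; a short calculation shows that on this new column the rows of a pair consisting of one element of $S$ and one element of $\ext{S}$ that were previously equal (namely $s_l$ and $s_1 \asymbol s_2$, or $s_r$ and $s_2 \asymbol s_3$) now disagree. Hence each associativity fix strictly refines the partition of $S \cup \ext{S}$, i.e.\ strictly increases $\Psi$, so there are at most $|M_\lang|-1$ of them over the whole run as well. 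Consequently every pass through the inner loop is finite.

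Next I would bound the number of hypotheses constructed on \Cref{line:hypothesis}, which controls both the middle and outer loops. The key claim is that whenever a column is added through $\proc{HandleCounterexample}$ --- either because the hypothesis is incompatible with its table or because the teacher returns a counterexample $z$ --- the resulting table is no longer closed, forcing at least one subsequent closedness fix before the next hypothesis is built. I would establish this from the invariant of $\proc{HandleCounterexample}$ (adapted from Drewes and H\"ogberg): the procedure returns a context $c'$ together with an element $t \in \ext{S}$ and its current representative $s \in S$ with $\row(s) = \row(t)$ such that $\cinsert{c'}{t} \in \lang \not\Leftrightarrow \cinsert{c'}{s} \in \lang$; \Cref{lem:nonce} guarantees the recursion never bottoms out with an element of $S$ as a counterexample, so such a breakpoint exists. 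After adding $c'$, the new row of $t$ disagrees with $\row(s)$ on $c'$, and it cannot coincide with the row of any other $s' \in S$: such an $s'$ would agree with $s$ on all old columns, contradicting sharpness. Thus $t$ witnesses a closedness defect.

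Finally I would assemble the pieces. Since each hypothesis after the first is separated from its predecessor by a $\proc{HandleCounterexample}$ column that forces at least one closedness fix, $|S|$ strictly increases between consecutive hypotheses; as $|S| \le |M_\lang|$, at most $|M_\lang|$ hypotheses are ever built, so the outer and middle loops run finitely often, and each enclosed inner loop terminates by the bound above. Termination of $\proc{HandleCounterexample}$ itself is by well-founded induction on the size of its pomset argument, the base case being reached once the argument lies in $S \cup \ext{S}$. I expect the main obstacle to be the closedness-defect claim for $\proc{HandleCounterexample}$: making the breakpoint invariant precise and combining it with sharpness to rule out the new row coinciding with some other row of $S$ is the delicate step, whereas the row and $\Psi$ bounds are routine once sharpness and \Cref{lem:bound} are in hand.
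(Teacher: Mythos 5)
Your proof is correct and follows the paper's argument in all essentials: closedness fixes are bounded via \Cref{lem:bound} together with the sharpness invariant of \Cref{lm:invariant}; every column produced by $\proc{HandleCounterexample}$ forces a closedness defect (your breakpoint argument, resting on \Cref{lem:nonce} and on sharpness to rule out the separated row re-matching some other row of $S$, is exactly the paper's); and $\proc{HandleCounterexample}$ terminates because the first two recursive calls shrink the pomset while the third lands in $\ext{S}$. The one place you genuinely diverge is the bound on associativity fixes. The paper charges these to closedness fixes too: it observes that the new column $\cinsert{e}{\cempty \asymbol s_3}$ (resp.\ $\cinsert{e}{s_1 \asymbol \cempty}$) separates $\row(s_1 \asymbol s_2)$ from $\row(s_l)$ (resp.\ from $\row(s_r)$) and uses sharpness to conclude that $\row(s_1 \asymbol s_2)$ cannot remain equal to any other row of $S$, so the table becomes non-closed and the next iteration adds a row. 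You instead bound them directly by a second potential $\Psi(S,E) = |\{\row(t) : t \in S \cup \ext{S}\}|$, which the same calculation shows strictly increases, and which is capped by $|M_\lang|$ by the argument of \Cref{lem:bound} applied to $S \cup \ext{S}$. Both are sound; your version dispenses with the sharpness step in this case, while the paper's version yields the tighter count (at most $n-1$ defects of all kinds combined) on which its query-complexity analysis relies. One small imprecision on your side: termination of $\proc{HandleCounterexample}$ is not literally ``well-founded induction on the size of the pomset argument'', since the third recursive call's argument $u_1 \asymbol u_2$ need not be smaller than the original $z$; what saves it---and what your parenthetical remark about the base case amounts to---is that this argument lies in $\ext{S}$ and therefore triggers no further recursion, which is how the paper phrases it.
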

\begin{proof}
	First, we observe that fixing a closedness defect by adding a row (line~\ref{line:addtoS}) can only happen finitely many times, since, by \Cref{lem:bound}, the size of $\{\row(s) : s \in S\}$ is bounded by $M_\lang$.

	This means that it suffices to show the following two points:
	\begin{enumerate}
		\item Each iteration of any of the loops starting on lines~\ref{line:loop1}--\ref{line:loop3} either fixes a closedness defect by adding a row, or adapts $E$ so that $\angl{S,E}$ ends up \emph{not} being closed at the end of loop body. In the second case, a closedness defect will be fixed in the following iteration of the inner while loop.
		\item The calls to $\proc{HandleCounterexample}$ terminate.
	\end{enumerate}
	Combined, these show that the algorithm terminates.
	For the first point, we treat each of the cases:
	\begin{itemize}
	\item
		If the table is not closed, we directly find a new row that is taken from the $\ext{S}$-part of the table and added to the $S$-part of the table.
		\item
			Consider the failure of $\asymbol$-associativity, for $\asymbol \in \{{\cdot}, {\parallel}\}$, and let $s_1, s_2, s_3, s_l, s_r \in S$ and $e \in E$ be such that $\row(s_l) = \row(s_1 \asymbol s_2)$, $\row(s_r) = \row(s_2 \asymbol s_3)$, and $\row(s_l \asymbol s_3)(e) \ne \row(s_1 \asymbol s_r)(e)$.
			Suppose $\row(s_l \asymbol s_3)(e) \ne b$, with $b$ be the result of a membership query on $s_1 \asymbol s_2 \asymbol s_3$.
			Then $\cinsert{e}{\cempty \asymbol s_3}$ distinguishes the previously equal rows $\row(s_1 \asymbol s_2)$ and $\row(s_l)$, so adding it to $E$ creates a closedness defect.
			The fact that $\row(s_1 \asymbol s_2)$ cannot remain equal to another row than $\row(s_l)$ is a result of the sharpness invariant.

			Alternatively, $\row(s_l \asymbol s_3)(e) = b$ means $\row(s_1 \asymbol s_r)(e) \ne b$, for otherwise we would contradict $\row(s_l \asymbol s_3)(e) \ne \row(s_1 \asymbol s_r)(e)$.
			For similar reasons the context $\cinsert{e}{s_1 \asymbol \cempty}$ in this case distinguishes the previously equal rows $\row(s_1 \asymbol s_2)$ and $\row(s_r)$, creating a closedness defect.
		\item
			A compatibility defect results in the identification of a counterexample, the handling of which we discuss next.
		\item
			Whenever a counterexample is identified, we eventually find a context $c$, $s \in S$, and $t \in \ext{S} \setminus S$ such that $\row(t) = \row(s)$ and $\cinsert{c}{t} \in \lang \iff \cinsert{c}{s} \not\in \lang$.
			Thus, adding $c$ to $E$ creates a closedness defect.
	\end{itemize}

Termination of $\proc{HandleCounterexample}$ follows: the first two recursive calls in the procedure replace $z$ with strict subpomsets of $z$, whereas the last one replaces $z$ with an element of $\ext{S}$, so no further recursion will happen.
	\qed%
\end{proof}

\paragraph{Query Complexity.}
We determine upper bounds on the membership and equivalence query numbers of a run of the algorithm in terms of the size of the canonical pomset recogniser $n = |M_\lang|$, the size of the alphabet $k = |\Sigma|$, and the maximum number of operations (from $\{\cdot, \parallel\}$, used to compose alphabet symbols) $m$ found in a counterexample.
We note that since the number of distinct rows indexed by $S$ is bounded by $n$ and the table remains sharp throughout any run, the final size of $S$ is at most $n$.
Thus, the final size of $\ext{S}$ is in $\bigOh{n^2 + k}$.
Given the initialisation of $S$ with a single element, the number of closedness defects fixed throughout a run is at most $n - 1$.
This means that the total number of associativity defects fixed and counterexamples handled (including those resulting from compatibility defects) together is $n - 1$.
We can already conclude that the number of equivalence queries posed is bounded by $n$.
Moreover, we know that the final table will have at most $n$ columns, and therefore the total number of cells in that table will be in $\bigOh{n^3 + kn}$.

The number of membership queries posed during a run of the algorithm is given by the number of cells in the table plus the number of queries needed during the processing of counterexamples.
Consider the counterexample $z$ that contains the maximum number of operations among those encountered during a run.
The first two recursive calls of $\proc{HandleCounterexample}$ break down one operation, whereas the third is used to execute a base case making two membership queries and does not lead to any further recursion.
The number of membership queries made starting from a given counterexample is thus in $\bigOh{m}$.
This means the total number of membership queries during the processing of counterexamples is in $\bigOh{mn}$, from which we conclude that the number of membership queries posed during a run is in $\bigOh{n^3 + mn + kn}$.

\subsection{Minimality of Hypotheses}\label{sec:minimality}

In this section we will show that all hypotheses submitted by the algorithm to the teacher are minimal.
We first need to define what minimality means.
As is the case for DFAs, it is the combination of an absence of unreachable states and of every state exhibiting its own distinct behaviour.

\begin{definition}[Minimality]
	A pomset recogniser $\pr = \angl{M, \odot, \obar, \unit, i, F}$ is \emph{minimal} if it is reachable and for all $u, v \in \SP$ with $\free{i}(u) \ne \free{i}(v)$ there exists $c \in \PC$ such that $\cinsert{c}{u} \in \lang_\pr \iff \cinsert{c}{v} \not\in \lang_\pr$.
\end{definition}

Before proving the main result of this section, we need the following:

\begin{lemma}\label{lem:sound}
	For all pomset recognisers $\angl{M, \odot, \obar, \unit, i, F}$ and $u, v \in \SP$ such that $\free{i}(u) = \free{i}(v)$ we have for any $c \in \PC$ that $\free{i}(\cinsert{c}{u}) = \free{i}(\cinsert{c}{v})$.
\end{lemma}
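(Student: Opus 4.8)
The plan is to prove the statement by structural induction on the context $c$, regarded as an element of the free bimonoid $\SP(\Sigma \cup \{\cempty\})$ (\Cref{definition:pomset-sp}). Two facts drive the argument: first, substitution distributes over both composition operations, i.e.\ for $\asymbol \in \{{\cdot}, {\parallel}\}$ and any $t \in \SP$,
\[
	\cinsert{(c_1 \asymbol c_2)}{t} = \cinsert{c_1}{t} \asymbol \cinsert{c_2}{t};
\]
second, $\free{i}$ is a bimonoid homomorphism, so it carries $\cdot$ to $\odot$ and $\parallel$ to $\obar$.

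For the base cases I would argue as follows. If $c = \pempty$ or $c = \ltr{a}$ with $\ltr{a} \in \Sigma$, then $c$ contains no placeholder, so $\cinsert{c}{u} = c = \cinsert{c}{v}$ and the two sides agree trivially. The decisive base case is $c = \cempty$, where $\cinsert{\cempty}{u} = u$ and $\cinsert{\cempty}{v} = v$, so the desired identity is precisely the hypothesis $\free{i}(u) = \free{i}(v)$. For the inductive step $c = c_1 \asymbol c_2$, the displayed distributivity together with the homomorphism property rewrites $\free{i}(\cinsert{c}{u})$ as the corresponding operation of $M$ applied to $\free{i}(\cinsert{c_1}{u})$ and $\free{i}(\cinsert{c_2}{u})$. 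The induction hypothesis equates each factor with its $v$-counterpart, and reassembling via the same identity yields $\free{i}(\cinsert{c}{u}) = \free{i}(\cinsert{c}{v})$.

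I do not expect a genuine obstacle: this is a routine \emph{congruence} property. The only point needing a little care is the distributivity of substitution over composition, which should be read as an equality of pomsets up to isomorphism and follows directly from the inductive definition of $\cinsert{\cdot}{\cdot}$. An even cleaner, induction-free route exploits freeness directly: the map $t \mapsto \cinsert{c}{t}$ is the image of $c$ under the unique homomorphism $\sigma_u \colon \SP(\Sigma \cup \{\cempty\}) \to \SP$ extending $\ltr{a} \mapsto \ltr{a}$ and $\cempty \mapsto u$, so $\free{i} \circ \sigma_u$ is the unique homomorphism extending $\ltr{a} \mapsto i(\ltr{a})$, $\cempty \mapsto \free{i}(u)$. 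Since $\free{i}(u) = \free{i}(v)$, this generating assignment coincides with the one obtained from $v$, and uniqueness forces $\free{i} \circ \sigma_u = \free{i} \circ \sigma_v$; evaluating at $c$ gives the claim.
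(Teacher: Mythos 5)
Your proof is correct and follows essentially the same route as the paper's: structural induction on the context, with the case $c = \cempty$ carrying the hypothesis $\free{i}(u) = \free{i}(v)$ and the composite cases handled by distributing substitution over $\cdot$ and $\parallel$ and then applying the homomorphism property of $\free{i}$. The induction-free alternative via freeness that you sketch at the end is also valid and arguably cleaner, but it is not the argument the paper gives.
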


The minimality theorem below relies on table compatibility, which allows us to distinguish the behaviour of states based on the contents of their rows.
Note that the algorithm only submits a hypothesis in an equivalence query if that hypothesis is compatible with its table.

\begin{theorem}[Minimality of hypotheses]\label{prop:minimality}
	A closed, consistent, and associative observation $\angl{S, E}$ induces a minimal hypothesis if the hypothesis is compatible with its table.
\end{theorem}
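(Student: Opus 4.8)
The plan is to verify the two defining conditions of minimality from the \textbf{Minimality} definition: reachability and the separation property. Reachability is already in hand, since by \Cref{lem:reachability} the hypothesis $\hyp$ induced by a closed, consistent, and associative table is reachable; that lemma also gives the useful identity $\free{i_H}(s) = \row(s)$ for every $s \in S$. So the entire content of the proof lies in establishing the separation property: for all $u, v \in \SP$ with $\free{i_H}(u) \ne \free{i_H}(v)$, there exists a context $c \in \PC$ witnessing $\cinsert{c}{u} \in \lang_\hyp \iff \cinsert{c}{v} \not\in \lang_\hyp$.

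First I would reduce from arbitrary $u, v \in \SP$ to representatives in $S$. By reachability, every element of the carrier $H$ is of the form $\free{i_H}(u)$, and since $H = \{\row(s) : s \in S\}$, for each $u$ there is some $s_u \in S$ with $\free{i_H}(u) = \row(s_u) = \free{i_H}(s_u)$; likewise $s_v \in S$ for $v$. Because $\free{i_H}(u) \ne \free{i_H}(v)$, we get $\row(s_u) \ne \row(s_v)$, so these two rows differ in some column $e \in E$, say $\row(s_u)(e) \ne \row(s_v)(e)$. By the definition of $\row$, this means $\cinsert{e}{s_u} \in \lang \iff \cinsert{e}{s_v} \notin \lang$. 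The natural candidate for the separating context is then $c = e$.

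The main obstacle is bridging from membership in the \emph{target} language $\lang$ (which is what the row entries record) to membership in the \emph{hypothesis} language $\lang_\hyp$ (which is what the separation property demands). This is exactly where compatibility of the hypothesis with its table is needed. Since $s_u, s_v \in S$ and $e \in E$, the pomsets $\cinsert{e}{s_u}$ and $\cinsert{e}{s_v}$ lie in $T_{\angl{S, E}}$, so $T_{\angl{S,E}}$-compatibility gives $\cinsert{e}{s_u} \in \lang_\hyp \iff \cinsert{e}{s_u} \in \lang$ and similarly for $s_v$. Combining these equivalences with the row-inequality above yields $\cinsert{e}{s_u} \in \lang_\hyp \iff \cinsert{e}{s_v} \notin \lang_\hyp$.

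It remains to transport this back to the original $u$ and $v$, and this is the step that uses \Cref{lem:sound}. Since $\free{i_H}(u) = \free{i_H}(s_u)$, \Cref{lem:sound} applied to the hypothesis recogniser gives $\free{i_H}(\cinsert{e}{u}) = \free{i_H}(\cinsert{e}{s_u})$, so $\cinsert{e}{u}$ and $\cinsert{e}{s_u}$ are accepted or rejected together by $\hyp$, i.e.\ $\cinsert{e}{u} \in \lang_\hyp \iff \cinsert{e}{s_u} \in \lang_\hyp$; the same holds for $v$ and $s_v$. Chaining all the equivalences produces $\cinsert{e}{u} \in \lang_\hyp \iff \cinsert{e}{v} \notin \lang_\hyp$, so $c = e$ is the required separating context and minimality follows. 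The only delicate point to keep straight is the bookkeeping of the iff-chain, making sure the single negation introduced by $\row(s_u)(e) \ne \row(s_v)(e)$ is preserved through each compatibility and \Cref{lem:sound} rewrite.
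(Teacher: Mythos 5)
Your proposal is correct and follows essentially the same route as the paper's proof: reachability via \Cref{lem:reachability}, reduction of $u,v$ to representatives $s_u,s_v\in S$ with a distinguishing column $e\in E$, compatibility to pass from $\lang$ to $\lang_\hyp$ on $T_{\angl{S,E}}$, and \Cref{lem:sound} to transport the separation back to $u$ and $v$. No gaps.
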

\begin{proof}
	We obtain the hypothesis from \Cref{lem:hyp}.
	Since $S$ is subpomset-closed, we have by \Cref{lem:reachability} that the hypothesis is reachable.
	Moreover, for every $s \in S$ we have $\free{i_H}(s) = \row(s)$. 	Consider $u_1, u_2 \in \SP$ such that $\free{i_H}(u_1) \ne \free{i_H}(u_2)$.
	Then there exist $s_1, s_2 \in S$ such that $\row(s_1) = \free{i_H}(u_1)$ and $\row(s_2) = \free{i_H}(u_2)$, and we have $\row(s_1) \ne \row(s_2)$.
	Let $e \in E$ be such that $\row(s_1)(e) \ne \row(s_2)(e)$.
	We have
	\begin{align*}
		\free{i_H}(\cinsert{e}{u_1}) \in F_H &
			\iff \free{i_H}(\cinsert{e}{s_1}) \in F_H
			\tag*{(\Cref{lem:sound}) \phantom{\qed}}\\
		&
			\iff \cinsert{e}{s_1} \in \lang_\hyp \\
		&
			\iff \row(s_1)(e) = 1 \\
		&
			\iff \row(s_2)(e) = 0 \\
		&
			\iff \cinsert{e}{s_2} \not\in \lang_\hyp \\
		&
			\iff \free{i_H}(\cinsert{e}{s_2}) \not\in F_H \\
		&
			\iff \free{i_H}(\cinsert{e}{u_2}) \not\in F_H.
			\tag*{(\Cref{lem:sound}) \qed}
	\end{align*}
\end{proof}

As a corollary, we find that the canonical pomset recogniser is minimal.

\begin{proposition}\label{prop:canmin}
	The canonical pomset recogniser is minimal.
\end{proposition}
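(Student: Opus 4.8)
The plan is to obtain this as a direct corollary of \Cref{prop:minimality}, applied to the table $\angl{\SP, \PC}$. First I would invoke \Cref{lem:fullhyp}, which tells us that $\angl{\SP, \PC}$ is a closed, consistent, and associative observation table. By the definition of the canonical pomset recogniser, its hypothesis is precisely the one induced by this table via \Cref{lem:hyp}. So the only remaining hypothesis of \Cref{prop:minimality} that I must discharge is that the induced hypothesis is \emph{compatible} with its table, i.e.\ $T_{\angl{\SP, \PC}}$-compatible.

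The key observation is that for the full table, compatibility is automatic. Recall $T_{\angl{\SP, \PC}} = \{\cinsert{e}{s} : s \in \SP, e \in \PC\}$. I would argue that this set is in fact all of $\SP$: taking $e = \cempty$ and letting $s$ range over $\SP$ already yields every pomset, so $T_{\angl{\SP, \PC}} = \SP$. Compatibility then amounts to showing $x \in \lang_\hyp \iff x \in \lang$ for all $x \in \SP$, which is exactly the statement that the canonical recogniser recognises $\lang$. This follows because the underlying map $\free{i_\lang}$ sends each $u \in \SP$ to $\row_{\angl{\SP, \PC}}(u)$ (by \Cref{lem:reachability}, since $\SP$ is trivially subpomset-closed and every element of $\SP$ lies in the row-index set $S = \SP$), and $u \in F_\lang$ iff $\row_{\angl{\SP, \PC}}(u)(\cempty) = 1$ iff $\cinsert{\cempty}{u} = u \in \lang$. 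Hence $\lang_\hyp = \lang$, giving compatibility.

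With compatibility established, \Cref{prop:minimality} applies verbatim and yields that the canonical pomset recogniser is minimal. I do not expect any serious obstacle here: the entire argument is a matter of checking that the hypotheses of \Cref{prop:minimality} are met for the specific table $\angl{\SP, \PC}$, and the one nontrivial-looking condition (compatibility) collapses because $T_{\angl{\SP, \PC}} = \SP$ and the canonical recogniser recognises $\lang$ by construction. The mildest care needed is to confirm that \Cref{lem:reachability} is being applied to a genuinely subpomset-closed index set, which $\SP$ certainly is.
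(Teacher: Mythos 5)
Your proposal is correct and follows essentially the same route as the paper: the paper's one-line proof invokes \Cref{lem:nonce} with $S = \SP$ to conclude the canonical recogniser accepts $\lang$ (hence is compatible with its table) and then applies \Cref{prop:minimality}, whereas you re-derive that $\SP$-compatibility inline via \Cref{lem:reachability} together with the observation that $T_{\angl{\SP,\PC}} = \SP$. The content is identical; you have merely unfolded the proof of \Cref{lem:nonce} rather than citing it.
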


\section{Conversion to Pomset Automata}\label{sec:conversion}

Bimonoids are a useful representation of pomset languages because sequential and parallel composition are on an equal footing; in the case of the learning algorithm of the previous section, this helps us treat both operations similarly.
On the other hand, the behaviour of a program is usually thought of as a series of actions, some of which involve launching two or more threads that later combine.
Here, sequential actions form the basic unit of computation, while fork/join patterns of threads are specified separately.
\emph{Pomset automata}~\cite{kappe-brunet-luttik-silva-zanasi-2018b} encode this more asymmetric model: they can be thought of as non-deterministic finite automata with an additional transition type that brokers forking and joining threads.

In this section, we show how to convert a pomset recogniser to a certain type of pomset automaton, where acceptance of a pomset is guided by its structure; conversely, we show that each of the pomset automata in this class can be represented by a pomset recogniser.
Together with the previous section, this establishes that the languages of pomset automata in this class are learnable.

If $S$ is a set, we write $\M(S)$ for the set of \emph{finite multisets} over $S$.
A finite multiset over $S$ is written $\phi = \mset{s_1, \dots, s_n}$.

\begin{definition}[Pomset automata]
A \emph{pomset automaton} (PA) is a tuple $A =\angl{Q, I, F, \delta, \gamma}$ where
\begin{itemize}
    \item $Q$ is a set of \emph{states}, with $I, F \subseteq Q$ the \emph{initial} and \emph{accepting states}, and
    \item $\delta \colon Q \times \Sigma \to 2^Q$ the \emph{sequential transition function}, and
    \item $\gamma \colon Q \times \M(Q) \to 2^Q$ the \emph{parallel transition function}.
\end{itemize}
Lastly, for every $q \in Q$ there are finitely many $\phi \in \M(Q)$ such that $\gamma(q, \phi) \neq \emptyset$.
\end{definition}

A finite PA can be represented graphically: every state is drawn as a vertex, with accepting states doubly circled and initial states pointed out by an arrow, while $\delta$-transitions are represented by labelled edges, and $\gamma$-transitions are drawn as a multi-ended edge.
For instance, in \Cref{figure:simple-pa}, we have drawn a PA with states $q_0$ through $q_5$ with $q_5$ accepting, and $q_1 \in \delta(q_0, \ltr{a})$ (among other $\delta$-transitions), while the multi-ended edge represents that $q_2 \in \gamma(q_1, \mset{q_3, q_4})$, i.e., $q_2$ can launch threads starting in $q_3$ and $q_4$, which, upon termination, resume in $q_2$.%

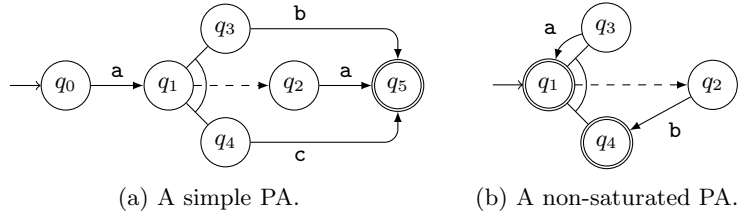
\begin{figure}
    \centering
    \begin{subfigure}[t]{0.55\textwidth}
        \centering
        \begin{tikzpicture}[every node/.style={transform shape},initial text={}]
            \node[state,initial] (q0) {$q_0$};
            \node[state,right=7mm of q0] (q1) {$q_1$};
            \node[state,above right=4mm of q1] (q3) {$q_3$};
            \node[state,below right=4mm of q1] (q4) {$q_4$};
            \node[state,right=10mm of q1] (q2) {$q_2$};
            \node[state,accepting,right=7mm of q2] (q5) {$q_5$};

            \draw (q0) edge[-latex] node[above] {$\ltr{a}$} (q1);
            \draw[-] (q1) edge (q3);
            \draw[-] (q1) edge (q4);
            \draw[dashed] (q1) edge[-latex] (q2);
            \draw[rounded corners=5pt,-latex] (q3) -| node[above,xshift=-4em] {$\ltr{b}$} (q5);
            \draw[rounded corners=5pt,-latex] (q4) -| node[below,xshift=-4em] {$\ltr{c}$} (q5);
            \draw pic[draw,angle radius=.5cm] {angle=q4--q1--q3};
            \draw (q2) edge[-latex] node[above] {$\ltr{a}$} (q5);
        \end{tikzpicture}
        \caption{A simple PA.}%
        \label{figure:simple-pa}
    \end{subfigure}
    \begin{subfigure}[t]{0.3\textwidth}
        \centering
        \begin{tikzpicture}[scale=2,initial text={},initial distance=2mm]
            \node[state,accepting,initial] (q1) {$q_1$};
            \node[state,right=15mm of q1] (q2) {$q_2$};
            \node[state,accepting,below right=4mm of q1] (q4) {$q_4$};
            \node[state,above right=4mm of q1] (q3) {$q_3$};

            \draw[-] (q1) edge (q3);
            \draw[-] (q1) edge (q4);
            \draw pic[draw,angle radius=.5cm] {angle=q4--q1--q3};
            \draw[dashed,-latex] (q1) edge (q2);
            \draw[-latex] (q2) edge node[below right] {$\ltr{b}$} (q4);
            \draw[-latex] (q3) edge[bend right] node[above left] {$\ltr{a}$} (q1);
        \end{tikzpicture}
        \caption{A non-saturated PA.}%
        \label{figure:problematic-pa}
    \end{subfigure}
    \caption{Some pomset automata.}%
    \label{figure:example-pas}
\end{figure}

The sequential transition function is interpreted as in non-deterministic finite automata: if $q' \in \delta(q, \ltr{a})$, then a machine in state $q$ may transition to state $q'$ after performing the action $\ltr{a}$.
The intuition to the parallel transition function is that if $q' \in \gamma(q, \mset{r_1, \dots, r_n})$, then a machine in state $q$ may launch threads starting in states $r_1$ through $r_n$, and when each of those has terminated succesfully, may proceed in state $q'$.
Note how the representation of starting states in a $\gamma$-transition allows for the possibility of launching multiple instances of the same thread, and disregards their order---i.e., $\gamma(q, \mset{r_1, \dots, r_n}) = \gamma(q, \mset{r_n, \dots, r_1})$.
This intuition is made precise through the notion of a \emph{run}.

\begin{definition}[Run relation]%
\label{definition:runs}
The \emph{run relation} of a PA $A = \angl{Q, I, F, \delta, \gamma}$, denoted ${\runrel_A}$, is defined as the the smallest subset of $Q \times \SP \times Q$ satisfying
\begin{mathpar}
\inferrule{~}{%
    q \arun{\pempty}_A q
}
\and
\inferrule{
    q' \in \delta(q, \ltr{a})
}{%
    q \arun{\ltr{a}}_A q'
}
\and
\inferrule{%
    \forall 1 \leq i \leq n.\ r_i \arun{u_i}_A r_i' \in F \\\\
    q' \in \gamma(q, \mset{r_1, \dots, r_n})
}{%
    q \arun{u_1 \parallel \dots \parallel u_n}_A q'
}
\and
\inferrule{%
    q \arun{u}_A q'' \\\\
    q'' \arun{v}_A q'
}{%
    q \arun{u \cdot v}_A q'
}
\end{mathpar}
The \emph{language accepted} by $A$ is
\(
    \lang_A = \{ u \in \SP : \exists q \in I, q' \in F.\ q \arun{u}_A q' \}
\).
\end{definition}

\begin{example}
If $A$ is the PA from \Cref{figure:simple-pa}, we can see that $q_3 \arun{\ltr{b}}_A q_5$ and $q_4 \arun{\ltr{c}}_A q_5$ as a result of the second rule; by the third rule, we find that $q_1 \arun{\ltr{b} \parallel \ltr{c}}_A q_2$.
Since $q_2 \arun{\ltr{a}} q_5$ and $q_0 \arun{a}_A q_1$ (again by the second rule), we can conclude $q_0 \arun{\ltr{a} \cdot (\ltr{b} \parallel \ltr{c}) \cdot \ltr{a}}_A q_5$ by repeated application of the last rule.
The language accepted by this PA is the singleton set $\{ \ltr{a} \cdot (\ltr{b} \parallel \ltr{c}) \cdot \ltr{a} \}$.
\end{example}

In general, finite pomset automata can accept a very wide range of pomset languages, including all context free (pomset) languages~\cite{kappe-brunet-luttik-silva-zanasi-2019}.
The intuition behind this is that the mechanism of forking and joining encoded in $\gamma$ can be used to simulate a call stack.
For example, the automaton in \Cref{figure:problematic-pa} accepts the strictly context-free language (of words) $\{ \ltr{a}^n \cdot \ltr{b}^n : n \in \naturals \}$.
It follows that PAs can represent strictly more pomset languages than pomset recognisers.
To tame the expressive power of PAs at least slightly, we propose the following.

\begin{definition}[Saturation]
We say that $A = \angl{Q, I, F, \delta, \gamma}$ is \emph{saturated} when for all $u, v \in \SP$ with $u, v \neq \pempty$, both of the following are true:
\begin{enumerate}[label={(\roman*)}]
    \item
    If $q \arun{u \cdot v}_A q'$, then there exists a $q'' \in Q$ with $q \arun{u}_A q''$ and $q'' \arun{v}_A q'$.

    \item
    If $q \arun{u \parallel v}_A q'$, then there exist $r, s \in Q$ and $r', s' \in F$ such that
    \begin{mathpar}
    r \arun{u}_A r' \and
    s \arun{v}_A s' \and
    q' \in \gamma(q, \mset{r, s})
    \end{mathpar}
\end{enumerate}
\end{definition}

\begin{example}
Returning to \Cref{figure:example-pas}, we see that the PA in \Cref{figure:simple-pa} is saturated, while \Cref{figure:problematic-pa} is not, as a result of the run $q_1 \arun{\ltr{a} \cdot \ltr{a} \cdot \ltr{b} \cdot \ltr{b}}_A q_4$, which does not admit an intermediate state $q$ such that $q_1 \arun{\ltr{a} \cdot \ltr{a}}_A q$ and $q \arun{\ltr{b} \cdot \ltr{b}}_A q_4$.
\end{example}

We now have everything in place to convert the encoding of a language given by a pomset recogniser to a pomset automaton.
The idea is to represent every element $q$ of the bimonoid by a state which accepts exactly the language of pomsets mapped to $q$; the transition structure is derived from the operations.

\begin{lemma}%
\label{lemma:recognisable-to-saturated-regular}
Let $\mathcal{R} = \angl{M, \odot, \obar, \unit, i, F}$ be a pomset recogniser.
We construct the pomset automaton $A = \angl{M, F, \{ \unit \}, \delta, \gamma}$ (note: we use $F$ as the set of initial states) where $\delta \colon M \times \Sigma \to 2^M$ and $\gamma \colon M \times \M(M) \to 2^M$ are given by
\begin{mathpar}
\delta(q, \ltr{a}) = \{ q' : i(\ltr{a}) \odot q' = q \}
\and
\gamma(q, \phi) = \{ q' : (r \obar r') \odot q' = q,\ \phi = \mset{r, r'} \}
\end{mathpar}
Then $A$ is saturated, and $\lang_A = \lang_\mathcal{R}$.
\end{lemma}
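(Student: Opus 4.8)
The entire proof rests on a single characterisation of the run relation of $A$:
\[
  q \arun{u}_A q' \iff \free{i}(u) \odot q' = q,
  \qquad\text{for all } u \in \SP,\ q, q' \in M.
\]
Granting this, both claims reduce to short calculations. Since the initial states of $A$ are $F$ and its only accepting state is $\unit$, we get $u \in \lang_A$ iff $\free{i}(u) \odot \unit = q$ for some $q \in F$; as $\unit$ is a unit for $\odot$ this says $\free{i}(u) \in F$, i.e.\ $u \in \lang_\pr$. For saturation, clause (i) follows by taking $q'' = \free{i}(v) \odot q'$: from $\free{i}(u \cdot v) \odot q' = q$ and associativity of $\odot$ we read off $\free{i}(u) \odot q'' = q$ and $\free{i}(v) \odot q' = q''$, which the characterisation turns into $q \arun{u}_A q''$ and $q'' \arun{v}_A q'$. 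Clause (ii) follows by taking $r = \free{i}(u)$, $s = \free{i}(v)$ and $r' = s' = \unit$: then $r \arun{u}_A \unit$, $s \arun{v}_A \unit$, and $(r \obar s) \odot q' = \free{i}(u \parallel v) \odot q' = q$ places $q'$ in $\gamma(q, \mset{r,s})$. So the real content is the characterisation, which I would prove by a pair of inductions, one per direction.

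The forward implication ($\Rightarrow$) I would prove by rule induction on the derivation of $q \arun{u}_A q'$. The reflexivity and letter rules are immediate from $\unit \odot q = q$ and the definition of $\delta$; the sequential rule closes using associativity of $\odot$ together with $\free{i}(u \cdot v) = \free{i}(u) \odot \free{i}(v)$. The parallel rule is the delicate case, and this is exactly where the design of $A$ pays off: since $\gamma(q,\phi)$ is nonempty only when $\phi$ is a two-element multiset $\mset{r,s}$, the rule can fire only with $n = 2$, so no spurious wider splits arise; moreover its premises force the two threads to terminate in the accepting set of $A$, which is $\{\unit\}$, so both terminal states equal $\unit$. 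The induction hypothesis then gives $\free{i}(u) \odot \unit = r$ and $\free{i}(v) \odot \unit = s$, i.e.\ $r = \free{i}(u)$ and $s = \free{i}(v)$, while $q' \in \gamma(q,\mset{r,s})$ unfolds to $(r \obar s) \odot q' = (\free{i}(u) \obar \free{i}(v)) \odot q' = \free{i}(u \parallel v) \odot q' = q$.

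The backward implication ($\Leftarrow$) I would prove by induction on the structure of $u$. For $u = \pempty$ the hypothesis reduces to $q' = q$ and reflexivity applies; for $u = \ltr{a}$ it is the definition of $\delta$. For $u = u_1 \cdot u_2$ I set $q'' = \free{i}(u_2) \odot q'$, obtain $q \arun{u_1}_A q''$ and $q'' \arun{u_2}_A q'$ from the induction hypothesis, and combine them with the sequential rule. For $u = u_1 \parallel u_2$ I put $r = \free{i}(u_1)$ and $s = \free{i}(u_2)$; since $\unit$ is a unit, the induction hypothesis yields $r \arun{u_1}_A \unit$ and $s \arun{u_2}_A \unit$ with $\unit$ accepting, while $(r \obar s) \odot q' = q$ makes $q' \in \gamma(q, \mset{r,s})$, so the parallel rule gives $q \arun{u_1 \parallel u_2}_A q'$. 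Here I would invoke the standard fact that every sp-pomset with at least two elements decomposes as a sequential or parallel composition of two strictly smaller nonempty sp-pomsets, which licenses the induction.

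The main obstacle is getting the parallel case of the characterisation right in both directions at once: soundness needs $\gamma$ to emit only binary splits so that no $n \neq 2$ runs are generated, and both directions hinge on the accepting set being precisely $\{\unit\}$, which is what lets the unit law collapse each thread's terminal state onto $\free{i}$ of that thread. Once the characterisation is in hand, the language equality and both saturation clauses are immediate bookkeeping. One should also note in passing that $A$ is a legitimate PA, i.e.\ that for each $q$ only finitely many $\phi$ satisfy $\gamma(q,\phi) \neq \emptyset$; this is immediate when $M$ is finite, since there are then only finitely many two-element multisets over $M$.
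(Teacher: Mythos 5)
Your proposal is correct and follows essentially the same route as the paper's own proof: the identical key claim that $q \arun{u}_A q'$ iff $\free{i}(u) \odot q' = q$, proved by rule induction in one direction and structural induction on $u$ in the other, with saturation and language equality then read off exactly as you describe. The points you flag as delicate (binary-only $\gamma$, accepting set $\{\unit\}$, finiteness of the fork targets) are handled the same way in the paper.
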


\begin{example}
Let $\angl{M, \odot, \obar, \unit, i, F}$ be the pomset recogniser from \Cref{ex:nested}.
The pomset automaton that arises from the construction above is partially depicted in \Cref{figure:recognisable-to-saturated-regular}; we have not drawn the state $q_\bot$ and its incoming transitions, or forks into $\unit$, to avoid clutter.
In this PA, we see that, since $q_\ltr{a} \odot q_1 = q_\ltr{b}$ and $i(\ltr{a}) = q_\ltr{a}$, we have $q_1 \in \delta(q_\ltr{b}, \ltr{a})$.
Furthermore, since $(q_\ltr{b} \obar q_\ltr{b}) \odot \unit = q_1 \odot \unit = q_1$, we also have $\unit \in \gamma(q_1, \mset{q_\ltr{b}, q_\ltr{b}})$.
Finally, $q_\ltr{b}$ is initial, since $F = \{ q_\ltr{b} \}$.
\end{example}

\begin{figure}
    \centering
    \begin{tikzpicture}[initial text={}]
        \node[state] (qb) {$q_\ltr{b}$};
        \node[state,accepting,right=of qb] (unit) {$\unit$};
        \node[state,left=of qb] (q1a) {$q_1$};
        \node[state,right=of unit] (qa) {$q_\ltr{a}$};

        \draw[->] (qb) edge node[above] {$\ltr{b}$} (unit);
        \draw[->] (qb) edge node[above] {$\ltr{a}$} (q1a);
        \draw (q1a.north east) edge[bend left] (qb.north west);
        \draw (q1a.south east) edge[bend right] (qb.south west);
        \draw (q1a) + (38:5mm) arc (38:322:5mm);
        \path (q1a.south) edge[bend right,dashed,-latex,in=-90,out=-90,looseness=0.4] (unit.south);
        \path (qa) edge[->] node[above] {$\ltr{a}$} (unit);
    \end{tikzpicture}
    \caption{%
        Part of the PA obtained from the pomset recogniser from \Cref{ex:nested}, using the construction from \Cref{lemma:recognisable-to-saturated-regular}.
        The state $q_\bot$ (which does not contribute to the language of the automaton) and forks into the state $\unit$ are not pictured.
    }%
    \label{figure:recognisable-to-saturated-regular}
\end{figure}
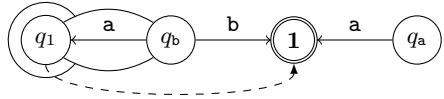

We have thus shown that the language of any pomset recogniser can be accepted by a finite and saturated PA\@.
In turn, this shows that our algorithm can, in principle, be adapted to work with a teacher that takes a (saturated) PA instead of a pomset recogniser as hypothesis, by simply converting the hypothesis pomset recogniser to an equivalent PA before sending it over.

Conversely, we can show that the transition relations of a saturated PA carry the algebraic structure of a bimonoid, and use that to show that a language recognised by a saturated PA is also recognised by a bimonoid.
This shows that our characterisation is ``tight'', i.e., languages recognised by saturated PAs are precisely those recognised by bimonoids, and hence learnable.

\begin{lemma}%
\label{lemma:saturated-regular-to-recognisable}
Let $A = \angl{Q, I, F, \delta, \gamma}$ be a saturated pomset automaton.
We can construct a pomset recogniser $\mathcal{R} = \angl{M, \odot, \obar, \unit, i, F'}$, where
\begin{mathpar}
M = \{ {\arun{u}_A} : u \in \SP \}
\and
{\arun{u}_A} \odot {\arun{v}_A} = {\arun{u \cdot v}_A}
\and
{\arun{u}_A} \obar {\arun{v}_A} = {\arun{u \parallel v}_A}
\and
i(\ltr{a}) = {\arun{\ltr{a}}_A}
\and
F' = \{ {\arun{u}_A} \in M : \exists q \in I, q' \in F.\ q \arun{u}_A q' \}
\end{mathpar}
Now $\odot$ and $\obar$ are well-defined, and $\mathcal{R}$ is a pomset recogniser such that $\lang_\mathcal{R} = \lang_A$.
\end{lemma}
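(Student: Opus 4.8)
The plan is to exhibit $\mathcal{R}$ as the quotient of the free bimonoid $\angl{\SP, \cdot, \parallel, \pempty}$ by the relation ${\sim}$ defined by $u \sim v \iff {\arun{u}_A} = {\arun{v}_A}$. The surjection $\Phi\colon \SP \to M$, $\Phi(u) = {\arun{u}_A}$, is onto by the definition of $M$; once I show that ${\sim}$ is a bimonoid congruence, $M$ inherits a bimonoid structure from $\SP$ for which the stated equations hold, $\odot$ and $\obar$ are automatically well defined, and all the bimonoid axioms (associativity of both operations, commutativity of $\obar$, and the unit laws for $\unit = \Phi(\pempty)$) descend directly from the corresponding identities on sp-pomsets. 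Thus the entire content is to prove that ${\sim}$ is a congruence, i.e.\ that $\Phi(u \cdot v)$ and $\Phi(u \parallel v)$ depend only on $\Phi(u)$ and $\Phi(v)$.

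For the sequential case I would prove the stronger statement that ${\arun{u \cdot v}_A} = {\arun{u}_A} \mathbin{;} {\arun{v}_A}$ for \emph{all} $u, v \in \SP$, where $\mathbin{;}$ denotes relational composition in diagrammatic order. The inclusion $\supseteq$ is the last inference rule defining ${\runrel_A}$, and $\subseteq$ is precisely saturation~(i) for $u, v \neq \pempty$; when a factor is $\pempty$ the claim reduces to ${\arun{u}_A} \mathbin{;} {\arun{\pempty}_A} = {\arun{u}_A} = {\arun{\pempty}_A} \mathbin{;} {\arun{u}_A}$, which holds because $u \cdot \pempty = \pempty \cdot u = u$ and the rule $q \arun{\pempty}_A q$ makes ${\arun{\pempty}_A}$ contain the identity relation. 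As relational composition manifestly does not depend on the representatives, this settles congruence for $\cdot$.

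For the parallel case I would instantiate the third run rule with two threads and combine it with saturation~(ii) to obtain, for $u, v \neq \pempty$,
\[
    {\arun{u \parallel v}_A} = \{(q, q') : \exists r \in \mathrm{acc}({\arun{u}_A}),\ s \in \mathrm{acc}({\arun{v}_A}).\ q' \in \gamma(q, \mset{r, s})\},
\]
with $\mathrm{acc}(R) = \{r : \exists r' \in F.\ (r, r') \in R\}$; since the right-hand side refers only to ${\arun{u}_A}$ and ${\arun{v}_A}$, this proves congruence whenever both factors are non-empty. The step I expect to be the main obstacle is the interaction with the empty pomset: unlike relational composition, this parallel combinator does not automatically treat a factor whose run relation equals ${\arun{\pempty}_A}$ as transparent, so I must argue directly---using $u \parallel \pempty = u$ together with saturation---that $\Phi(u \parallel v)$ is unchanged when a factor is replaced by $\pempty$, and more generally by any ${\sim}$-equivalent pomset. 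This is the delicate use of saturation on which the whole construction hinges.

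Finally, with ${\sim}$ shown to be a congruence, a straightforward induction on the structure of sp-pomsets gives $\free{i}(u) = {\arun{u}_A}$ for all $u$ (the base cases being $i(\ltr{a}) = {\arun{\ltr{a}}_A}$ and $\free{i}(\pempty) = \unit$, the inductive steps being the defining equations of $\odot$ and $\obar$), and then
\[
    u \in \lang_\mathcal{R} \iff \free{i}(u) \in F' \iff {\arun{u}_A} \in F' \iff \exists q \in I, q' \in F.\ q \arun{u}_A q' \iff u \in \lang_A
\]
by the definition of $F'$, establishing $\lang_\mathcal{R} = \lang_A$.
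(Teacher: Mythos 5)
Your scaffolding (congruence on the free bimonoid, the induction giving $\free{i}(u) = {\arun{u}_A}$, and the language equality via $F'$) matches the paper, and your treatment of $\odot$ is in fact cleaner than the paper's: proving ${\arun{u \cdot v}_A} = {\arun{u}_A} \mathbin{;} {\arun{v}_A}$ for \emph{all} $u, v$ handles empty factors uniformly and makes well-definedness of $\odot$ immediate, whereas the paper argues by cases. The problem is the step you yourself flag as ``delicate'': the claim that ${\arun{u \parallel v}_A}$ is unchanged when a factor $v$ with ${\arun{v}_A} = {\arun{\pempty}_A}$ is replaced by $\pempty$ is not merely hard to extract from saturation --- it is \emph{false} for the automaton as given. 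Take $Q = I = F = \{q_0\}$ over $\Sigma = \{\ltr{a}\}$ with $\delta(q_0,\ltr{a}) = \{q_0\}$ and $\gamma \equiv \emptyset$. Since rule three never fires, only totally ordered pomsets admit runs, so the automaton is saturated (condition~(ii) holds vacuously, condition~(i) by splitting words), and ${\arun{\ltr{a}}_A} = \{(q_0,q_0)\} = {\arun{\pempty}_A}$, yet ${\arun{\ltr{a} \parallel \ltr{a}}_A} = \emptyset \neq \{(q_0,q_0)\} = {\arun{\ltr{a} \parallel \pempty}_A}$. So $\obar$ is genuinely not well defined on this $A$, and no direct argument from $u \parallel \pempty = u$ plus saturation can repair it.

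The missing idea is the paper's preprocessing step: without loss of generality one first adds an isolated state $q_\bot$ that is neither initial nor accepting and has no transitions. This preserves the language and saturation, but ensures $(q_\bot, q_\bot) \in {\arun{u}_A}$ iff $u = \pempty$, hence ${\arun{u}_A} = {\arun{\pempty}_A}$ iff $u = \pempty$. After that normalisation the case you worry about --- a nonempty pomset ${\sim}$-equivalent to $\pempty$ --- simply cannot occur, and your two-thread formula combined with saturation~(ii) finishes the parallel case exactly as you outline. With that amendment your argument goes through.
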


If $A$ is finite, then so is $\mathcal{R}$, since each of the elements of $M$ is a relation on $Q$, and there are finitely many relations on a finite set.

In general, the PA obtained from a pomset recogniser may admit runs where the same fork transition is nested repeatedly.
Recognisable pomset languages of \emph{bounded width} may be recognised by a pomset recogniser that is \emph{depth-nilpotent}~\cite{lodaya-weil-2000}, which can be converted into a \emph{fork-acyclic} PA by way of an sr-expression~\cite{lodaya-weil-2000,kappe-brunet-luttik-silva-zanasi-2018b}.
However, this detour via sr-expressions is not necessary: one can adapt \Cref{lemma:recognisable-to-saturated-regular} to produce a fork-acyclic PA, when given a depth-nilpotent pomset recogniser.
The details are discussed in
\ifarxiv%
\Cref{appendix:strongly-recognisable-to-strongly-regular}%
\else%
the full version~\cite{arxiv-version}%
\fi.

We conclude this section by remarking that the minimal pomset recogniser for a bounded-width language is necessarily depth-nilpotent~\cite{lodaya-weil-2000}; since our algorithm produces a minimal pomset recogniser, this means that we can also produce a fork-acyclic PA after learning a bounded-width recognisable pomset language.

\section{Discussion}\label{sec:discussion}

To learn DFAs, there are several alternatives to the observation table data structure that reduce the space complexity of the algorithm.
Most notable is the \emph{classification tree}~\cite{kearns1994}, which distinguishes individual pairs of words (which for us would be pomsets) at every node rather than filling an entire row for each of them.
The TTT algorithm~\cite{isberner2014} further builds on this and achieves optimal space complexity.
Given that we developed the first learning algorithm for pomset languages, we opted for the simplicity of the observation table---optimisations such as those analogous to the aforementioned work are left to future research.

We would like to extend our algorithm to learn recognisers based on arbitrary algebraic theories.
One challenge is to ensure that the equations of the theory hold for hypotheses, by generalising our definition of associativity (\Cref{def:assoctable}).

Our algorithm can also be specialised to learn languages recognised by commutative monoids.
These languages of \emph{multisets} can alternatively be represented as semi-linear sets~\cite{parikh-1966} or described using Presburger arithmetic~\cite{ginsburg-spanier-1964}.
While not all languages described this way are recognisable (for instance, the set of multisets over $\Sigma = \{ \ltr{a}, \ltr{b} \}$ with as many $\ltr{a}$'s as $\ltr{b}$'s~\cite{lodaya-weil-2000}), it would be interesting to be able to learn at least the fragment representable by commutative monoids, and apply that to one of the domains where semi-linear sets are used.

Our algorithm is limited to learning languages of series-parallel pomsets; there exist pomsets which are not series-parallel, each of which must contain an ``N-shape''~\cite{gischer-1988,grabowski-1981,valdes-tarjan-lawler-1982}.
Since N-shapes appear in pomsets that describe message passing between threads, we would like to be able to learn such languages as well.
We do not see an obvious way to extend our algorithm to include these pomsets, but perhaps recent techniques from~\cite{fahrenberg-johansen-struth-thapa-2020} can provide a solution.

Every hypothesis of our algorithm can be converted to a pomset automaton.
The final pomset recogniser for a bounded-width language is minimal, and hence depth-nilpotent~\cite{lodaya-weil-2000}, which means that it can be converted to a fork-acyclic PA\@.
In future work, we would like to guarantee that the same holds for intermediate hypotheses when learning a bounded-width language.

Running two threads in parallel may be implemented by running some initial section of those threads in parallel, followed by running the remainder of those threads in parallel.
This interleaving is represented by the \emph{exchange law}~\cite{gischer-1988,grabowski-1981}.
One can specialise pomset recognisers to include this interleaving to obtain recognisers of pomset languages closed under subsumption~\cite{lodaya-weil-2000}, i.e., such that if a pomset $u$ is recognised, then so are all of the ``more sequential'' versions of $u$.
We would like to adapt our algorithm to learn these types of recognisers, and exploit the extra structure provided by the exchange law to optimise further.

We have shown that recognisable pomset languages correspond to saturated regular pomset languages (\Cref{lemma:recognisable-to-saturated-regular,lemma:saturated-regular-to-recognisable}).
One question that remains is whether there is an algorithm that can learn all or at least a larger class of regular pomset languages.
Given that pomset automata can accept context-free languages (\Cref{figure:problematic-pa}), we wonder if a suitable notion of context-free grammars for pomset languages could be identified.
Clark~\cite{clark2010} showed that there exists a subclass of context-free languages that can be learned via an adaptation of {\LStar}.
Arguably, this adaptation learns recognisers with a monoidal structure and reverses this structure to obtain a grammar.
An extension of this work to pomset languages might lead to a learning algorithm that learns more PAs.

\bibliographystyle{splncs04}
\bibliography{pomset-learning}

\vfill

{\small\medskip\noindent{\bf Open Access} This chapter is licensed under the terms of the Creative Commons\break Attribution 4.0 International License (\url{http://creativecommons.org/licenses/by/4.0/}), which permits use, sharing, adaptation, distribution and reproduction in any medium or format, as long as you give appropriate credit to the original author(s) and the source, provide a link to the Creative Commons license and indicate if changes were made.}

{\small \spaceskip .28em plus .1em minus .1em The images or other third party material in this chapter are included in the chapter's Creative Commons license, unless indicated otherwise in a credit line to the material.~If material is not included in the chapter's Creative Commons license and your intended\break use is not permitted by statutory regulation or exceeds the permitted use, you will need to obtain permission directly from the copyright holder.}

\medskip\noindent\includegraphics{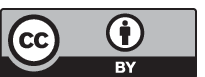}

\ifarxiv%

\appendix%

\section{Omitted Proofs about the Algorithm}\label{appendix:proofs-algo}

\begin{oneshot}{\Cref{lem:fullhyp}}
	$\angl{\SP, \PC}$ is a closed, consistent, and associative observation table.
\end{oneshot}
\begin{proof}
	Closedness holds trivially, and associativity follows from consistency.
	It remains to show that consistency holds.
	For $\asymbol \in \{{\cdot}, {\parallel}\}$ we have for all $s_1, s_2, t \in \SP$ such that $\row(s_1) = \row(s_2)$ and $e \in \PC$ that
	\begin{align*}
		\row(s_1 \asymbol t)(e) = 1 &
			\iff \cinsert{e}{s_1 \asymbol t} \in \lang \\
		&
			\iff \cinsert{\cinsert{e}{\cempty \asymbol t}}{s_1} \in \lang \\
		&
			\iff \row(s_1)(\cinsert{e}{\cempty \asymbol t}) = 1 \\
		&
			\iff \row(s_2)(\cinsert{e}{\cempty \asymbol t}) = 1 \\
		&
			\iff \cinsert{\cinsert{e}{\cempty \asymbol t}}{s_2} \in \lang \\
		&
			\iff \cinsert{e}{s_2 \asymbol t} \in \lang \\
		&
			\iff \row(s_2 \asymbol t)(e) = 1
	\intertext{%
        and symmetrically we can show that
    }
		\row(t \asymbol s_1)(e) = 1
            &\iff \row(t \asymbol s_2)(e) = 1. \tag*{\qed}
	\end{align*}
\end{proof}

\begin{oneshot}{\Cref{lem:reachability} (Hypothesis reachability)}
	Given a closed, consistent, and associative observation table $\angl{S, E}$, the hypothesis it induces is reachable.
	In particular, $\free{i_H}(s) = \row(s)$ for any $s \in S$.
\end{oneshot}
\begin{proof}
	By induction on the pomsets in $S$, using subpomset-closedness.
	We have
	\begin{mathpar}
    \free{i_H}(\pempty) = \unit_H = \row(\pempty)
	\and
    \forall \ltr{a} \in \Sigma.\ \free{i_H}(\ltr{a}) = i_H(\ltr{a}) = \row(\ltr{a}).
	\end{mathpar}
	Now assume for the inductive case that $s_1, s_2 \in S$ are such that $\free{i_H}(s_1) = \row(s_1)$ and $\free{i_H}(s_2) = \row(s_2)$.
	Then
	\[
		\free{i_H}(s_1 \cdot s_2) = \free{i_H}(s_1) \odot_H \free{i_H}(s_2) = \row(s_1) \odot_H \row(s_2) = \row(s_1 \cdot s_2)
	\]
	and similarly $\free{i_H}(s_1 \parallel s_2) = \row(s_1 \parallel s_2)$.
    \qed%
\end{proof}

\begin{oneshot}{\Cref{lem:nonce}}
	The hypothesis of any closed, consistent, and associative observation table $\angl{S, E}$ is $S$-compatible.
\end{oneshot}
\begin{proof}
	For all $s \in S$, we have
	\begin{align*}
		s \in \lang_\hyp &
			\iff \free{i_H}(s) \in F_H \\
		&
			\iff \row(s) \in F_H
			\tag{\Cref{lem:reachability}} \\
		&
			\iff \row(s)(\cempty) = 1 \\
		&
			\iff s \in \lang.
			\tag*{\qed}
	\end{align*}
\end{proof}

\begin{oneshot}{\Cref{lem:sound}}
	For all pomset recognisers $\angl{M, \odot, \obar, \unit, i, F}$ and $u, v \in \SP$ such that $\free{i}(u) = \free{i}(v)$ we have for any $c \in \PC$ that $\free{i}(\cinsert{c}{u}) = \free{i}(\cinsert{c}{v})$.
\end{oneshot}
\begin{proof}
	Proof by induction on the elements of $\PC$.
	For any $\ltr{a} \in \Sigma$ trivially have $\free{i}(\ltr{a}) = \free{i}(\ltr{a})$, and similarly $\free{i}(\pempty) = \free{i}(\pempty)$.
	Furthermore,
	\[
		\free{i}(\cinsert{\cempty}{u}) = \free{i}(u) = \free{i}(v) = \free{i}(\cinsert{\cempty}{v}).
	\]
	Now let $c_1, c_2 \in \PC$ satisfy $\free{i}(\cinsert{c_1}{u}) = \free{i}(\cinsert{c_1}{v})$ and $\free{i}(\cinsert{c_2}{u}) = \free{i}(\cinsert{c_2}{v})$.
	Then
	\begin{align*}
		\free{i}(\cinsert{(c_1 \cdot c_2)}{u}) &
			= \free{i}(\cinsert{c_1}{u} \cdot \cinsert{c_2}{u}) \\
		&
			= \free{i}(\cinsert{c_1}{u}) \odot \free{i}(\cinsert{c_2}{u}) \\
		&
			= \free{i}(\cinsert{c_1}{v}) \odot \free{i}(\cinsert{c_2}{v}) \\
		&
			= \free{i}(\cinsert{c_1}{v} \cdot \cinsert{c_2}{v}) \\
		&
			= \free{i}(\cinsert{(c_1 \cdot c_2)}{v})
	\end{align*}
	and similarly $\free{i}(\cinsert{(c_1 \parallel c_2)}{u}) = \free{i}(\cinsert{(c_1 \parallel c_2)}{v})$.
    \qed%
\end{proof}

\begin{oneshot}{\Cref{prop:canmin}}
	The canonical pomset recogniser is minimal.
\end{oneshot}
\begin{proof}
	From \Cref{lem:nonce} we know that the pomset recogniser accepts $\lang$, so by \Cref{prop:minimality} it is minimal.
\end{proof}

\section{Omitted proofs about saturated pomset automata}\label{appendix:proofs-pas}

\begin{oneshot}{\Cref{lemma:recognisable-to-saturated-regular}}%
Let $\mathcal{R} = \angl{M, \odot, \obar, \unit, i, F}$ be a pomset recogniser.
We construct the pomset automaton $A = \angl{M, F, \{ \unit \}, \delta, \gamma}$ (note: the set $F$ consists of initial states) where $\delta \colon M \times \Sigma \to 2^M$ and $\gamma \colon M \times \M(M) \to 2^M$ are given by
\begin{mathpar}
\delta(q, \ltr{a}) = \{ q' : f(\ltr{a}) \odot q' = q \}
\and
\gamma(q, \phi) = \{ q' : (r \obar r') \odot q' = q,\ \phi = \mset{r, r'} \}
\end{mathpar}
Then $A$ is saturated, and $\lang_A = \lang_\mathcal{R}$.
\end{oneshot}

\begin{proof}
Our proof relies on the following property of $A$:
\begin{claim}
$q \arun{u}_A q'$ if and only if $\free{i}(u) \odot q' = q$.
\end{claim}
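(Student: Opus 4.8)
The plan is to prove the two implications separately, each by an appropriate induction, relying throughout on three facts: that $\free{i}$ is a bimonoid homomorphism (so $\free{i}(\pempty) = \unit$, $\free{i}(\ltr{a}) = i(\ltr{a})$, $\free{i}(u \cdot v) = \free{i}(u) \odot \free{i}(v)$ and $\free{i}(u \parallel v) = \free{i}(u) \obar \free{i}(v)$); that $\odot$ is associative with two-sided unit $\unit$; and that the accepting states of $A$ are exactly $\{\unit\}$, while $\gamma$ is nonempty only on two-element multisets.

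For the forward implication ($q \arun{u}_A q' \Rightarrow \free{i}(u) \odot q' = q$) I would do rule induction on the derivation of $q \arun{u}_A q'$, following the four clauses of \Cref{definition:runs}. The empty rule gives $u = \pempty$ and $q = q'$, where $\free{i}(\pempty) \odot q' = \unit \odot q' = q'$. The letter rule gives $q' \in \delta(q, \ltr{a})$, i.e.\ $i(\ltr{a}) \odot q' = q$, which is exactly $\free{i}(\ltr{a}) \odot q' = q$. For the sequential rule, the induction hypotheses $\free{i}(u) \odot q'' = q$ and $\free{i}(v) \odot q' = q''$ combine, via associativity of $\odot$ and the homomorphism property, to $\free{i}(u \cdot v) \odot q' = \free{i}(u) \odot (\free{i}(v) \odot q') = q$. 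The parallel rule is the only interesting case: since $\gamma(q, \phi)$ is empty unless $\phi = \mset{r_1, r_2}$, the rule can only fire with $n = 2$, and the side condition $r_i' \in \{\unit\}$ forces each sub-run to end in $\unit$; the induction hypothesis then reads $\free{i}(u_i) \odot \unit = r_i$, i.e.\ $\free{i}(u_i) = r_i$, so that $\free{i}(u_1 \parallel u_2) \odot q' = (r_1 \obar r_2) \odot q' = q$ by the defining equation of $q' \in \gamma(q, \mset{r_1, r_2})$.

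For the backward implication I would instead do structural induction on the sp-pomset $u$. The cases $u = \pempty$ and $u = \ltr{a}$ mirror the base rules above. If $u = v \cdot w$ with $v, w$ nonempty, I set $q'' = \free{i}(w) \odot q'$, so that $\free{i}(v) \odot q'' = q$ and $\free{i}(w) \odot q' = q''$; the induction hypothesis yields runs $q \arun{v}_A q''$ and $q'' \arun{w}_A q'$, and the sequential rule concludes. If $u = v \parallel w$, I put $r = \free{i}(v)$ and $r' = \free{i}(w)$; from $(r \obar r') \odot q' = q$ I get $q' \in \gamma(q, \mset{r, r'})$, and since $\free{i}(v) \odot \unit = r$ the induction hypothesis gives $r \arun{v}_A \unit$ (and symmetrically $r' \arun{w}_A \unit$), whence the parallel rule with $n = 2$ and accepting targets $\unit$ produces $q \arun{v \parallel w}_A q'$.

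The main obstacle is the parallel case in both directions, which is where the specific shape of the construction must be exploited. The key points to get right are that the flat, $n$-ary parallel rule of \Cref{definition:runs} only ever instantiates at $n = 2$ here because $\gamma$ is supported on two-element multisets; that the accepting set of $A$ is precisely $\{\unit\}$, so each sub-run's target is pinned to $\unit$; and that the unit law then collapses $\free{i}(u_i) \odot \unit$ to $\free{i}(u_i)$. Wider parallel pomsets cause no trouble in the backward direction, since $\parallel$ is associative at the level of pomsets, so a single binary split $u = v \parallel w$ suffices and recursion handles the remainder.
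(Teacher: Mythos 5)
Your proof is correct and follows essentially the same route as the paper: rule induction on the run relation for the forward implication and structural induction on $u$ for the converse, with the parallel case resolved in both directions by observing that $\gamma$ is supported on two-element multisets, that the sole accepting state is $\unit$, and that the unit law collapses $\free{i}(u_i) \odot \unit$ to $\free{i}(u_i)$. No substantive differences from the paper's argument.
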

\begin{proof}
For the direction from left to right, we proceed by induction on $\runrel_A$.
In the base, there are two cases.
On the one hand, suppose $q \arun{u}_A q'$ is a trivial run, i.e., $u = \pempty$ and $q = q'$.
We then calculate $\free{i}(u) \odot q' = \unit \odot q' = q' = q$.
On the other hand, suppose $q \arun{u}_A q'$ is a $\delta$-run, which is to say that $u = \ltr{a}$ for some $\ltr{a} \in \Sigma$ with $q' \in \delta(q, \ltr{a})$.
It then follows that, $\free{i}(u) \odot q' = i(\ltr{a}) \odot q' = q$.

For the inductive step, there are again two cases to consider.
\begin{itemize}
    \item
    Suppose that $q \arun{u}_A q'$ is a composite run, i.e., that $u = v \cdot w$ and $q'' \in M$ such that $q \arun{v}_A q''$ and $q'' \arun{w}_A q'$.
    By induction, we have $\free{i}(v) \odot q'' = q$ and $\free{i}(w) \odot q' = q''$.
    In total, we find that
    \[
        \free{i}(u) \odot q'
            = (\free{i}(v) \odot \free{i}(w)) \odot q'
            = \free{i}(v) \odot (\free{i}(w) \odot q')
            = \free{i}(v) \odot q''
            = q
    \]

    \item
    Suppose $q \arun{u} q'$ because $u = v \parallel w$ and there exist $r, r' \in M$ such that $q' \in \gamma(q, \mset{r, r'})$ as well as $r \arun{v}_A \unit$ and $r' \arun{w}_A \unit$.
    Then by induction we know that $\free{i}(v) = \free{i}(v) \odot \unit = r$ and $\free{i}(w) = \free{i}(w) \odot \unit = r'$.
    Furthermore, since $q' \in \gamma(q, \mset{r, r'})$ we know that $(r \obar r') \odot q' = q$.
    In total, we find that
    \[
        \free{i}(u) \odot q'
            = (\free{i}(v) \obar \free{i}(w)) \odot q'
            = (r \obar r') \odot q'
            = q
    \]
\end{itemize}

\noindent
For the other direction, we proceed by induction on the structure of $u$.
In the base, there are two cases.
On the one hand, if $u = \pempty$, then $q \arun{\pempty}_A q = q'$ immediately.
On the other hand, if $u = \ltr{a}$ for some $\ltr{a} \in \Sigma$, then $i(\ltr{a}) \odot q' = \free{i}(u) \odot q' = q$, and hence $q' \in \delta(q, \ltr{a})$.
Therefore, $q \arun{\ltr{a}}_A q'$.

For the inductive step, there are two cases to consider.
\begin{itemize}
    \item
    Suppose that $u = v \cdot w$ for $v, w \neq \pempty$.
    Let $q'' = \free{i}(w) \odot q'$.
    We then find by induction that $q'' \arun{w}_A q'$.
    Furthermore, note that $\free{i}(v) \odot q'' = \free{i}(v) \odot (\free{i}(w) \odot q') = (\free{i}(v) \odot \free{i}(w)) \odot q' = \free{i}(u) \odot q' = q$.
    Hence, we find by induction that $q \arun{v}_A q''$.
    Putting this together, it follows that $q \arun{u}_A q'$.

    \item
    If $u = v \parallel w$ for $v, w \neq \pempty$, then choose $r = \free{i}(v)$ and $r' = \free{i}(w)$, and note that $\free{i}(v) \odot \unit = r$ and $\free{i}(w) \odot \unit = r'$.
    Hence, we find by induction that $r \arun{v}_A \unit$ and $r' \arun{w}_A \unit$.
    Since also $(r \obar r') \odot q' = (\free{i}(v) \obar \free{i}(w)) \odot q' = \free{i}(u) \odot q' = q$, it follows that $q' \in \gamma(q, \mset{r, r'})$.
    In total, $q \arun{u}_A q'$.
    \qed%
\end{itemize}
\end{proof}

\noindent
Continuing the proof of \Cref{lemma:recognisable-to-saturated-regular}, to see that $A$ is saturated, let $u, v \in \SP$ with $u, v \neq 1$.
First, suppose $q \arun{u \cdot v}_A q'$.
By the above property, $\free{i}(u \cdot v) \odot q' = q$.
If we now choose $q'' = \free{i}(v) \odot q' \in M$, we find that $\free{i}(u) \odot q'' = \free{i}(u) \odot \free{i}(v) \odot q' = \free{i}(u \cdot v) \odot q'$.
By the same property we find that $q \arun{u}_A q''$ and $q'' \arun{v}_A q'$.

Next, suppose that $q \arun{u \parallel v}_A q'$.
By the property above, we have $(\free{i}(u) \obar \free{i}(v)) \odot q' = q$.
If we choose $r = \free{i}(u)$ and $s = \free{i}(v)$ as well as $r' = s' = \unit$, then $\free{i}(u) \odot r' = r$ and $\free{i}(v) \odot s' = s$, and thus $r \arun{u}_A r'$ and $s \arun{v}_A s'$ again by the same property.
Furthermore, since $(r \obar s) \odot q' = q$, we have $q' \in \gamma(q, \mset{r, s})$.

Note that $A$ is finite.
To see that $A$ accepts $\lang_\mathcal{R}$, note that $u \in \lang_\mathcal{R}$ precisely when $\free{i}(u) \in F$, which holds if and only if there exists a $q \in F$ with $\free{i}(u) = q$, which, by the above, is equivalent to $q \arun{u} \unit$ for some $q \in F$, i.e., $u \in \lang_A$.
\qed%
\end{proof}

\begin{oneshot}{\Cref{lemma:saturated-regular-to-recognisable}}
Let $A = \angl{Q, I, F, \delta, \gamma}$ be a saturated pomset automaton.
We can construct a pomset recogniser $\mathcal{R} = \angl{M, \odot, \obar, \unit, i, F'}$, where
\begin{mathpar}
M = \{ {\arun{u}_A} : u \in \SP \}
\and
{\arun{u}_A} \odot {\arun{v}_A} = {\arun{u \cdot v}_A}
\and
{\arun{u}_A} \obar {\arun{v}_A} = {\arun{u \parallel v}_A}
\and
i(\ltr{a}) = {\arun{\ltr{a}}_A}
\and
F' = \{ {\arun{u}_A} \in M : \exists q \in I, q' \in F.\ q \arun{u}_A q' \}
\end{mathpar}
Now $\odot$ and $\obar$ are well-defined, and $\mathcal{R}$ is a pomset recogniser such that $\lang_\mathcal{R} = \lang_A$.
\end{oneshot}
\begin{proof}
Without loss of generality, we can assume that for $u \in \SP$, we have ${\arun{u}_A} = {\arun{\pempty}_A}$ if and only if $u = \pempty$: the implication from right to left is obvious, and the converse can be guaranteed (while preserving saturation of $A$ as well as its language) by adding a non-initial and non-accepting state without transitions.

Let $u, u', v, v' \in \SP$ such that ${\arun{u}_A} = {\arun{u'}_A}$ and ${\arun{v}_A} = {\arun{v'}_A}$.
To prove that the operations are well-defined, we should show ${\arun{u \cdot v}_A} = {\arun{u' \cdot v'}_A}$ and ${\arun{u \parallel v}_A} = {\arun{u' \parallel v'}_A}$.
For the former equality, we consider two cases.
\begin{itemize}
    \item
    If $u = \pempty$, then ${\arun{u'}_A} = {\arun{u}_A} = {\arun{\pempty}_A}$, and thus $u' = \pempty$.
    In that case
    \[
        {\arun{u \cdot v}_{A'}}
            = {\arun{v}_{A'}}
            = {\arun{v'}_{A'}}
            = {\arun{u' \cdot v'}_{A'}}
    \]
    A similar derivation applies if any of the other pomsets are empty.

    \item
    If $u, u', v, v' \neq \pempty$, suppose $q \arun{u \cdot v}_A q'$.
    Because $A$ is saturated, we find a $q'' \in Q$ such that $q \arun{u}_A q''$ and $q'' \arun{v}_A q'$.
    Since ${\arun{u}_A} = {\arun{u'}_A}$ and ${\arun{v}_A} = {\arun{v'}_A}$, it follows that $q \arun{u'}_A q''$ and $q'' \arun{v'}_A q'$, and hence $q \arun{u' \cdot v'}_A q'$.
    This shows that $\arun{u \cdot v}_A$ is contained in $\arun{u' \cdot v'}_A$; the converse can be shown symmetrically.
\end{itemize}

\noindent
Next, we show that ${\arun{u \parallel v}_A} = {\arun{u' \parallel v'}_A}$; again, we have two cases to consider.
\begin{itemize}
    \item
    If $u = \pempty$, then ${\arun{u'}_A} = {\arun{u}_A} = {\arun{\pempty}_A}$, and thus $u' = \pempty$ by the above.
    Thus,
    \[
        {\arun{u \parallel v}_{A'}}
            = {\arun{v}_{A'}}
            = {\arun{v'}_{A'}}
            = {\arun{u' \parallel v'}_{A'}}
    \]
    A similar derivation applies if any of the other pomsets are empty.

    \item
    If $u, u', v, v' \neq \pempty$, suppose that $q \arun{u \parallel v}_A q'$.
    In that case, since $A$ is saturated, we find $r, s \in Q$ and $r', s' \in F$ such that $r \arun{u}_A r'$ and $s \arun{v}_A s'$, and $q' \in \gamma(q, \mset{r, s})$.
    Since ${\arun{u}_A} = {\arun{u'}_A}$ and ${\arun{v}_A} = {\arun{v'}_A}$, it then follows that $r \arun{u'}_A r'$ and $s \arun{v'} s'$, and thus $q \arun{u' \parallel v'} q'$.
    This shows that $\arun{u \parallel v}_A$ is contained in $\arun{u' \parallel v'}$; the converse inclusion follows by a similar argument.
\end{itemize}

\noindent
The bimonoid laws can now be proved straightforwardly; for instance, $\obar$ is associative because for $u, v, w \in \SP$ we have
\[
    {\arun{u}_A} \obar ({\arun{v}_A} \obar {\arun{w}_A})
        = \arun{u \parallel (v \parallel w)}_A
        = \arun{(u \parallel v) \parallel w}_A
        = ({\arun{u}_A} \obar {\arun{v}_A}) \obar {\arun{w}_A}
\]
Associativity of $\odot$ and the fact that $\arun{\pempty}_A$ is a unit can be shown similarly.
This makes $\mathcal{R}$ a proper pomset recogniser.

It should be clear that for $u \in \SP$ we have $\free{i}(u) = {\arun{u}_A}$.
From this, it follows that $u \in \lang_A$ if and only if there exist $q \in I$ and $q' \in F$ such that $q \arun{u}_A q'$, which holds precisely when $\free{i}(u) = {\arun{u}_A} \in F'$, and thus $u \in \lang_\mathcal{R}$.
\qed%
\end{proof}

\section{Translation to fork-acyclic pomset automata}%
\label{appendix:strongly-recognisable-to-strongly-regular}

Bimonoids recognisers can, in general, encode pomset languages of \emph{unbounded width}, i.e., without an upper bound on the antichains in the pomsets, and even pomset languages of \emph{unbounded depth}, with a complex mutual nesting structure between sequential and parallel composition.
By \Cref{lemma:recognisable-to-saturated-regular}, even these can be accepted by a pomset automaton.
Typical programs, however, have a limited number of parallel threads, and tend not to launch threads recursively.
To further restrict PAs so as to exclude this kind of behaviour, we can use the notion of fork-acyclicity~\cite{lodaya-weil-2000}; for pomset automata, this comes down to the following~\cite{kappe-brunet-luttik-silva-zanasi-2018b}:

\begin{definition}[Fork-acyclicity]
Let $A = \angl{Q, I, F, \delta, \gamma}$ be a PA\@.
We define the \emph{support relation} of $A$, denoted $\preceq_A$, as the smallest preorder on $Q$ satisfying
\begin{mathpar}
\inferrule{%
    \ltr{a} \in \Sigma \\
    q' \in \delta(q, \ltr{a})
}{
    q' \preceq_A q
}
\and
\inferrule{%
    \phi \in \M(Q) \\
    q' \in \gamma(q, \phi)
}{%
    q' \preceq_A q
}
\and
\inferrule{%
    \gamma(q, \mset{r, \ldots})
}{%
    r \preceq_A q
}
\end{mathpar}
We say that $A$ is \emph{fork-acyclic} if for all $q, r \in Q$ and $\phi \in \M(Q)$ with $r \in \phi$ and $\gamma(q, \phi) \neq \emptyset$ it holds that $q \not\preceq_A r$.
\end{definition}

\begin{example}
Recall the PAs drawn in \Cref{figure:example-pas}.
If $A$ is the PA in \Cref{figure:simple-pa}, then $A$ is fork-acyclic; after all, the only fork is given by $q_2 \in \gamma(q_1, \mset{q_3, q_4})$, and we have that $q_3, q_4 \prec_A q_1$---in other words, the runs starting at $q_3$ or $q_4$ do not depend on $q_1$.
On the other hand, if $A$ is the PA in \Cref{figure:problematic-pa}, then $A$ is not fork-acyclic, since $\gamma(q_1, \mset{q_3, q_4}) \neq \emptyset$ but $q_1 \preceq_A q_3$, since $q_1 \in \delta(q_3, \ltr{a})$.
\end{example}

Intuitively, if $q$ and $q'$ are states such that $q' \preceq_A q$, then the pomsets that can be read starting in $q$ somehow depend on those that can be read from $q'$.
A pomset automaton is fork-acyclic if it can only fork into states whose language does not depend on the point of origin for the fork.
For example, the PA in \Cref{figure:simple-pa} is fork-acyclic, because $q_3$ and $q_4$ cannot reach (or fork into) $q_1$, while the PA in \Cref{figure:problematic-pa} is not, because $\gamma(q_1, \mset{q_3, q_4}) \neq \emptyset$ while $q_1 \preceq_A q_3$.

The corresponding restriction on pomset recognisers is \emph{depth-nilpotency}~\cite{lodaya-weil-2000}, which we adapt for our purposes as follows.

\begin{definition}[Depth-nilpotency]
Let $\mathcal{R} = \angl{M, \odot, \obar, \unit, i, F}$ be a pomset recogniser.
We define $\prec_\mathcal{R}$ as the smallest transitive relation on $M$ satisfying the following rule for all $s, u, v, w, t, x, y \in M$:
\[
    \inferrule{%
        s = u \odot (v \obar (w \odot t \odot x)) \odot y \\
        v \obar (w \odot t \odot x) \neq w \odot t \odot x
    }{%
        s \prec_\mathcal{R} t
    }
\]
We say that $\mathcal{R}$ is \emph{depth-nilpotent} if the following hold:
\begin{enumerate}[label={(\roman*)}]
    \item
    there exists an $N \in \naturals$ such that every $\prec_\mathcal{R}$-chain is of length at most $N$, that is, for all $q_1, \dots, q_n \in M$ with $q_1 \prec_\mathcal{R} \cdots \prec_\mathcal{R} q_n$, it holds that $n \leq N$.

    \item
    there exists a $\zero \in M \setminus F$ such that $s \parallel \zero = \zero$ for all $s \in M$.

    \item
    if $s, t \in M$ and $s \obar t = t$, then either $s = \unit$ or $t = \zero$.

    \item
    for all $u \in \SP$, we have $\free{i}(u) = \unit$ if and only if $u = \pempty$.
\end{enumerate}
When $\mathcal{R}$ is depth-nilpotent, we write $D_\mathcal{R}(s)$ for the length of the longest $\prec_\mathcal{R}$-chain starting at $s \in M$, i.e., the maximal $n$ such that there exist $s = s_1, \dots, s_n \in M$ with $s_1 \prec_\mathcal{R} \cdots \prec_\mathcal{R} s_n$.
\end{definition}

\begin{example}
The pomset recogniser $\mathcal{R}$ defined in \Cref{ex:loop} is depth-nilpotent; indeed, the maximal $\prec_\mathcal{R}$-chain is given by $q_\bot \prec_\mathcal{R} q_1 \prec_\mathcal{R} q_\ltr{a}, q_\ltr{b} \prec_\mathcal{R} \unit$.

On the other hand, if $\mathcal{R}$ is as in \Cref{ex:nested}, then $\mathcal{R}$ is \emph{not} depth-nilpotent, because $q_\ltr{b} = q_\ltr{a} \odot (q_\ltr{b} \obar q_\ltr{b})$, and hence $q_\ltr{b} \prec_\mathcal{R} q_\ltr{b}$.
\end{example}

One can combine existing results to show that a language recognised by a depth-nilpotent pomset recogniser can also be accepted by a fork-acyclic PA:\@ by~\cite[Theorem~3.9]{lodaya-weil-2000}, every such language is series-rational, and by~\cite[Theorem~7.16]{kappe-brunet-luttik-silva-zanasi-2018b}, every series-rational language is recognised by a fork-acyclic PA\@.
However, this detour is not necessary: we can convert a depth-nilpotent bimonoid to a fork-acyclic automaton directly, by adapting the construction from \Cref{lemma:recognisable-to-saturated-regular}.

\begin{lemma}%
\label{lemma:strongly-recognisable-to-strongly-regular}
Let $\mathcal{R} = \angl{M, \odot, \obar, \unit, i, F}$ be a depth-nilpotent pomset recogniser.
Let $A = \angl{M, F, \{ \unit \}, \delta, \gamma}$ be the PA constructed from $\mathcal{R}$ just like in \Cref{lemma:recognisable-to-saturated-regular}, except that we define the parallel transition function $\gamma \colon: M \times \M(M) \to 2^M$ by
\[
    \gamma(q, \phi) = \{ q' : (r \obar s) \odot q' = q,\, \phi = \mset{r, s},\, D_\mathcal{R}(r) < D_\mathcal{R}(q),\, D_\mathcal{R}(s) < D_\mathcal{R}(q) \}
\]
Now $A$ is fork-acyclic, and $\lang_A = \lang_\mathcal{R}$.
\end{lemma}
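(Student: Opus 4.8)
The plan is to mirror the proof of \Cref{lemma:recognisable-to-saturated-regular}, reusing as much of it as possible and treating the two new requirements---fork-acyclicity and the reverified language equality---separately. Write $A_0$ and $\gamma_0$ for the automaton and parallel transition function of \Cref{lemma:recognisable-to-saturated-regular}. Since the new $\gamma$ is obtained from $\gamma_0$ by intersecting with the two depth constraints, we have $\gamma(q,\phi) \subseteq \gamma_0(q,\phi)$, so every run of $A$ is a run of $A_0$. In particular the forward direction of its characterisation is inherited: $q \arun{u}_A q'$ implies $\free{i}(u) \odot q' = q$, and hence $\lang_A \subseteq \lang_{A_0} = \lang_\mathcal{R}$. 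It remains to prove fork-acyclicity and the reverse inclusion $\lang_\mathcal{R} \subseteq \lang_A$.

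For fork-acyclicity I would first record that $D_\mathcal{R}$ is monotone under $\odot$: directly from the defining rule of $\prec_\mathcal{R}$, if $m \prec_\mathcal{R} t$ then $x \odot m \odot y \prec_\mathcal{R} t$ (absorb $x$ into the leftmost and $y$ into the rightmost factor of the witnessing decomposition), so $D_\mathcal{R}(x \odot m \odot y) \ge D_\mathcal{R}(m)$. Using this I would prove, by induction on the generation of the support preorder, that $r \preceq_A q$ implies $D_\mathcal{R}(r) \le D_\mathcal{R}(q)$: the $\delta$- and $\gamma$-closure rules both present $q$ as a left $\odot$-multiple of the supported state, so depth does not decrease, while the fork rule produces an $r$ with $D_\mathcal{R}(r) < D_\mathcal{R}(q)$ by construction. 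Fork-acyclicity is then immediate: if $\gamma(q,\phi) \neq \emptyset$ and $r \in \phi$, then $D_\mathcal{R}(r) < D_\mathcal{R}(q)$, so $q \preceq_A r$ would give $D_\mathcal{R}(q) \le D_\mathcal{R}(r) < D_\mathcal{R}(q)$, a contradiction.

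For $\lang_\mathcal{R} \subseteq \lang_A$ I would re-run the structural induction establishing $q \arun{u}_A q'$ from $\free{i}(u) \odot q' = q$; the sequential and base cases are unchanged, and the only new obligation is to discharge the two depth side-conditions in the parallel case $u = v \parallel w$ with $v,w \neq \pempty$, where $r = \free{i}(v)$, $s = \free{i}(w)$ and $q = (r \obar s) \odot q'$. Here I would exhibit $q \prec_\mathcal{R} r$ and $q \prec_\mathcal{R} s$ explicitly---for the first, write $q = \unit \odot (s \obar (\unit \odot r \odot \unit)) \odot q'$, whose side condition $s \obar r \neq r$ holds because $s = \free{i}(w) \neq \unit$ by (iv) and $r \neq \zero$, invoking (iii). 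These give $D_\mathcal{R}(r), D_\mathcal{R}(s) < D_\mathcal{R}(q)$, so the required fork survives into the new $\gamma$. Since acceptance of $u$ starts from the initial state $\free{i}(u) \in F$ and ends in $\unit$, this reproduces $\lang_\mathcal{R} \subseteq \lang_A$.

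The main obstacle is exactly the step where $r = \zero$ (or $s = \zero$): then $s \obar r = \zero$ collapses the side condition, and the fork into $\zero$ is genuinely absent from $A$, so the biconditional of \Cref{lemma:recognisable-to-saturated-regular} fails verbatim. The resolution I would use is that, by the $\obar$-absorption in (ii), a parallel pomset with a $\zero$-valued branch itself maps to $\zero$; since $\zero \notin F$, such a pomset can never be an element of $\lang_\mathcal{R}$ nor lie on an accepting run from an initial state to $\unit$. Thus the only forks removed in passing from $\gamma_0$ to $\gamma$ are those into a branch carrying a $\zero$-valued computation, and none of these is needed to certify membership of an actual word of $\lang_\mathcal{R}$. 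Making this precise---arguing that the structural induction certifying $u \in \lang_\mathcal{R}$ never has to descend into a fork with a $\zero$-valued branch, using that $\zero$ behaves as a sink under depth-nilpotency---is the delicate part of the argument, and the place where conditions (ii)--(iv) are really used.
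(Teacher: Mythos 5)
Your proposal follows the paper's proof essentially step for step: the inclusion $\lang_A \subseteq \lang_\mathcal{R}$ comes from the fact that every transition of $A$ is a transition of the automaton of \Cref{lemma:recognisable-to-saturated-regular}, fork-acyclicity comes from monotonicity of $D_\mathcal{R}$ along the support preorder combined with the strict decrease built into $\gamma$, and the converse inclusion comes from re-running the run-characterisation induction and discharging the two depth side-conditions by exhibiting $q \prec_\mathcal{R} r$ and $q \prec_\mathcal{R} s$ via conditions (iii) and (iv). The ``delicate part'' you flag at the end is resolved in the paper exactly as you suggest, by the simple device of carrying $q \neq \zero$ as an extra hypothesis of the inductive claim (preserved through both inductive cases because $\zero$ is absorbing, and available at the top level because $\free{i}(u) \in F$ while $\zero \notin F$), so nothing essential is missing.
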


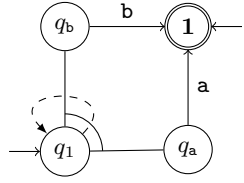
\begin{figure}[t]
    \centering
    \begin{tikzpicture}[initial text={}]
        \node[state] (qb) {$q_\ltr{b}$};
        \node[state,initial right,accepting,right=of qb] (unit) {$\unit$};
        \node[state,initial,below=of qb] (q1) {$q_1$};
        \node[state,below=of unit] (qa) {$q_\ltr{a}$};

        \draw[->] (qb) edge node[above] {$\ltr{b}$} (unit);
        \draw[->] (qa) edge node[right] {$\ltr{a}$} (unit);
        \draw (q1) edge (qb);
        \draw (q1) edge (qa);
        \draw pic[draw,angle radius=.5cm] {angle=qa--q1--qb};
        \path (q1) edge[dashed,-latex,loop,looseness=5] (q1);
    \end{tikzpicture}\vspace{2.5mm} 
    \caption{%
        Part of the PA obtained from the pomset recogniser from \Cref{ex:loop}, using the construction from \Cref{lemma:strongly-recognisable-to-strongly-regular}.
        Once more, the state $q_\bot$ and transitions into $\unit$ are not pictured.
    }%
    \label{figure:strongly-recognisable-to-strongly-regular}
\end{figure}

\begin{example}
Let $\angl{M, \odot, \obar, \unit, i, F}$ be the pomset recogniser from \Cref{ex:loop}.
The pomset automaton that arises from the construction above is partially depicted in \Cref{figure:strongly-recognisable-to-strongly-regular}.
Here, we have that $q_1$ and $\unit$ are initial states, since they both appear in $F$.
Furthermore, $\unit \in \delta(q_\ltr{b})$ because $i(\ltr{b}) \odot \unit = q_\ltr{b}$, and $q_1 \in \gamma(q_1, \mset{q_\ltr{a}, q_\ltr{b}})$ because $(q_\ltr{a} \obar q_\ltr{b}) \odot q_1 = q_1 \odot q_1 = q_1$.
However, $\unit \not\in \gamma(q_\ltr{b}, \mset{q_\ltr{b}, \unit})$ despite $(q_\ltr{b} \obar \unit) \odot \unit = q_\ltr{b}$, because $D_\mathcal{R}(q_\ltr{b})$ is not strictly larger than $D_\mathcal{R}(q_\ltr{b})$.
\end{example}

\begin{proof}[of \Cref{lemma:strongly-recognisable-to-strongly-regular}]
Our proof rests on the following two properties of $A$.

\begin{claim}
If $\free{i}(u) \odot q' = q$ and $q \neq \mathbf{0}$, then $q \arun{u}_A q'$.
\end{claim}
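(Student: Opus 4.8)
The plan is to prove the claim by induction on the structure of the sp-pomset $u$, reading the clauses defining $\delta$, $\gamma$ and $\runrel_A$ backwards. Both base cases are immediate and do not use $q \neq \zero$: if $u = \pempty$ then $\free{i}(u) \odot q' = \unit \odot q' = q'$, so $q = q'$ and the clause for $\pempty$ gives $q \arun{\pempty}_A q'$; if $u = \ltr{a}$ then $i(\ltr{a}) \odot q' = q$ means $q' \in \delta(q, \ltr{a})$ by definition, hence $q \arun{\ltr{a}}_A q'$.

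For the inductive step I decompose $u$ into non-empty factors. In the sequential case $u = v \cdot w$ I put $q'' = \free{i}(w) \odot q'$, so that $\free{i}(v) \odot q'' = q$; applying the induction hypothesis to $w$ and to $v$ gives $q'' \arun{w}_A q'$ and $q \arun{v}_A q''$, and the composition clause yields $q \arun{u}_A q'$. In the parallel case $u = v \parallel w$ I put $r = \free{i}(v)$ and $r' = \free{i}(w)$, so $q = (r \obar r') \odot q'$; the induction hypothesis applied to $v$ and $w$ (using $\free{i}(v) \odot \unit = r$ and $\free{i}(w) \odot \unit = r'$) gives $r \arun{v}_A \unit$ and $r' \arun{w}_A \unit$, and since $\unit$ is the unique accepting state of $A$ the parallel clause closes the case once I show $q' \in \gamma(q, \mset{r, r'})$. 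To invoke the induction hypothesis in each case I must know the relevant source is not $\zero$, so the step incurs two obligations: $q'' \neq \zero$ (sequential) and $r, r' \neq \zero$ (parallel).

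Verifying $q' \in \gamma(q, \mset{r, r'})$ is where depth-nilpotency is used, since the restricted $\gamma$ additionally requires $D_\mathcal{R}(r) < D_\mathcal{R}(q)$ and $D_\mathcal{R}(r') < D_\mathcal{R}(q)$. I obtain these from $\prec_\mathcal{R}$: the identity $q = \unit \odot (r' \obar (\unit \odot r \odot \unit)) \odot q'$ witnesses $q \prec_\mathcal{R} r$ provided the side-condition $r' \obar r \neq r$ holds, and a symmetric identity witnesses $q \prec_\mathcal{R} r'$. By condition (iii), $r' \obar r = r$ forces $r' = \unit$ or $r = \zero$; condition (iv) excludes $r' = \unit$ because $w \neq \pempty$, and $r \neq \zero$ is the obligation above, so the side-condition holds (symmetrically for $r'$). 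Since prepending $q$ to a longest $\prec_\mathcal{R}$-chain out of $r$ produces a strictly longer chain, $q \prec_\mathcal{R} r$ gives $D_\mathcal{R}(r) < D_\mathcal{R}(q)$, and likewise for $r'$.

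The main obstacle is discharging the $\neq \zero$ obligations, and the clean way to do it is a preliminary lemma stating that $\zero$ is absorbing for $\odot$ on both sides: $\zero \odot m = m \odot \zero = \zero$ for all $m \in M$. Granting this, $q'' = \zero$ would give $q = \free{i}(v) \odot \zero = \zero$, while $r = \zero$ would give $r \obar r' = \zero$ (by the parallel absorption of condition (ii)) and hence $q = \zero \odot q' = \zero$, both contradicting $q \neq \zero$. This absorption fact is not among the listed clauses, but I expect it to follow from the chain bound (i): if some product with $\zero$ were a state $n \neq \zero$, then $\obar$-absorption makes both $n \prec_\mathcal{R} \zero$ and $\zero \prec_\mathcal{R} n$ derivable from the $\prec_\mathcal{R}$-rule (in each case one chooses the inner factors so that the nondegeneracy side-condition reduces to $n \neq \zero$), producing an unbounded $\prec_\mathcal{R}$-chain and contradicting (i). I would isolate this as a short lemma and then run the induction above on top of it.
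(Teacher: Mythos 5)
Your induction is structurally identical to the paper's: the same base cases, the same intermediate state $q'' = \free{i}(w) \odot q'$ in the sequential case, and the same witnesses $r = \free{i}(v)$, $r' = \free{i}(w)$ in the parallel case, with conditions (iii) and (iv) used exactly as in the paper to rule out $r' \obar r = r$ and hence establish $q \prec_\mathcal{R} r, r'$ and the depth inequalities needed for $q' \in \gamma(q, \mset{r,r'})$. The one place you go beyond the paper is the handling of the non-$\zero$ obligations: the paper simply asserts that $q'' = \zero$ would force $q = \free{i}(v) \odot q'' = \zero$, silently using that $\zero$ absorbs $\odot$, which is not among the listed conditions (only $\obar$-absorption is). Your instinct to isolate this as a lemma is sound, and it does follow from conditions (i)--(iii), but not quite via the cycle you sketch: the direction $n \prec_\mathcal{R} \zero$ (for $n = m \odot \zero \neq \zero$) is not obviously derivable, since one cannot control $u \odot (v \obar (w \odot \zero \odot x)) \odot y$ well enough to make it equal $n$ while keeping the side condition. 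Instead, instantiate the rule with $t = \zero$, $w = m$, $x = \unit$ (so the inner term is $n$), $v = \zero$ (so $v \obar n = \zero \neq n$ by (ii)), and $u = y = \unit$: this yields $\zero \prec_\mathcal{R} \zero$ directly, hence an unbounded chain contradicting (i). With that small repair your argument is complete, and arguably more careful than the paper's own.
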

\begin{proof}
We proceed by induction on $u$.
In the base, we have two cases.
On the one hand, if $u = \pempty$, then $\free{i}(u) = \unit$, and so the claim holds immediately.
On the other hand, if $u = \ltr{a}$ for some $\ltr{a} \in \Sigma$, then $q' \in \delta(q, \ltr{a})$, whence $q \arun{u}_A q'$ again.

For the inductive step, there are two cases.
\begin{itemize}
    \item
    If $u = v \cdot w$ with $v, w \neq \pempty$, then choose $q'' = \free{i}(w) \odot q'$.
    Now $q'' \neq \mathbf{0}$, otherwise $q = \free{i}(v) \odot q'' = \mathbf{0}$.
    By induction, we then find that $q'' \arun{w}_A q'$, as well as $q \arun{v}_A q''$.
    In total, we have $q \arun{u}_A q'$.

    \item
    If $u = v \parallel w$ with $v, w \neq \pempty$, then choose $r = \free{i}(v)$ and $s = \free{i}(w)$.
    Now $r, s \neq \mathbf{0}$, otherwise $q = \free{i}(u) \odot q = (\free{i}(v) \obar \free{i}(w)) \odot q' = \mathbf{0}$.
    Furthermore, $r, s \neq \unit$, otherwise $v = \pempty$ or $w = \pempty$.
    Together, this means that $r \obar s \neq r$ and $r \obar s \neq s$, by depth-nilpotency of $M$.
    Since $\free{i}(v) \odot \unit = r$ and $\free{i}(w) \odot \unit = s$, we find by induction that $r \arun{v}_A \unit$ and $s \arun{w}_A \unit$.
    Since $q = (r \obar s) \odot q'$, it follows that $q \prec_\mathcal{R} r, s$, and hence $D_\mathcal{R}(r) < D_\mathcal{R}(q)$ as well as $D_\mathcal{R}(s) < D_\mathcal{R}(q)$.
    We then know that $q' \in \gamma(q, \mset{r, s})$, and thus $q \arun{u}_A q'$.
    \qed%
\end{itemize}
\end{proof}

\begin{claim}
If $q \preceq_A q'$, then $D_\mathcal{R}(q') \leq D_\mathcal{R}(q)$.
\end{claim}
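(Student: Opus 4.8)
The plan is to prove the statement by rule induction on the support preorder $\preceq_A$. Because $\preceq_A$ is defined as the \emph{smallest} preorder closed under the three generating rules, it coincides with the reflexive--transitive closure of the relation those rules generate; hence it suffices to (i) establish the desired depth inequality for each of the three generating pairs and (ii) check that it is preserved by reflexivity and transitivity. Point (ii) is immediate: $D_\mathcal{R}(q) \le D_\mathcal{R}(q)$ holds trivially, and $\le$ on $\naturals$ is transitive, so chaining two generating steps chains the two inequalities. The whole argument therefore reduces to the three base cases, which in each case relate the ``inner'' state (the left-hand side of $\preceq_A$) to the ``source'' state (the right-hand side).

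The technical heart is that left- and right-multiplication preserve $\prec_\mathcal{R}$: for all $p, m, t \in M$, if $m \prec_\mathcal{R} t$ then $p \odot m \prec_\mathcal{R} t$ and $m \odot p \prec_\mathcal{R} t$. It suffices to verify this for a single generating step of $\prec_\mathcal{R}$ and then propagate it along the transitive closure (only the first link out of $m$ is affected). Concretely, if $m = u \odot (v \obar (w \odot t \odot x)) \odot y$ witnesses $m \prec_\mathcal{R} t$, then $p \odot m = (p \odot u) \odot (v \obar (w \odot t \odot x)) \odot y$ witnesses $p \odot m \prec_\mathcal{R} t$ with the \emph{same} distinguished parallel factor, so the side condition $v \obar (w \odot t \odot x) \neq w \odot t \odot x$ is untouched; the right case is symmetric. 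Consequently any chain $m = s_1 \prec_\mathcal{R} \cdots \prec_\mathcal{R} s_n$ yields a chain $p \odot m \prec_\mathcal{R} s_2 \prec_\mathcal{R} \cdots \prec_\mathcal{R} s_n$ of the same length, whence $D_\mathcal{R}(m) \le D_\mathcal{R}(p \odot m)$ and likewise $D_\mathcal{R}(m) \le D_\mathcal{R}(m \odot p)$.

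With this monotonicity in hand the base cases are short. For the $\delta$-rule, $q' \in \delta(q, \ltr{a})$ means $i(\ltr{a}) \odot q' = q$, exhibiting $q$ as a left-multiple of $q'$, so $D_\mathcal{R}(q') \le D_\mathcal{R}(q)$; as the generated pair is $q' \preceq_A q$, this is exactly the inequality demanded. For the $\gamma$-target rule, $q' \in \gamma(q, \phi)$ gives $(r \obar s) \odot q' = q$, again making $q$ a left-multiple of $q'$, so $D_\mathcal{R}(q') \le D_\mathcal{R}(q)$. For the $\gamma$-branch rule, $\gamma(q, \mset{r, s}) \neq \emptyset$ forces, by the very definition of $\gamma$ in this construction, $D_\mathcal{R}(r) < D_\mathcal{R}(q)$ (and $D_\mathcal{R}(s) < D_\mathcal{R}(q)$); since the generated pair is $r \preceq_A q$, the strict inequality in particular gives $D_\mathcal{R}(r) \le D_\mathcal{R}(q)$. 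Combining the three cases with reflexivity and transitivity yields $D_\mathcal{R}(q) \le D_\mathcal{R}(q')$ whenever $q \preceq_A q'$, i.e.\ the depth is non-decreasing as one moves from an inner state to its source.

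I expect the monotonicity lemma to be the only real obstacle: one must ensure that a $\prec_\mathcal{R}$-chain out of $m$ lifts \emph{without shortening}, which works precisely because multiplication absorbs $p$ into the outermost sequential factor and leaves the distinguished parallel factor — and hence the entire tail of the chain — intact. The three base cases and the closure under the preorder operations are then routine bookkeeping.
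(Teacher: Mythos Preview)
Your proof is correct and follows essentially the same approach as the paper's: reduce to the generating pairs of $\preceq_A$, and for the $\delta$- and $\gamma$-target cases use that absorbing a left factor into the outermost $u$ of a $\prec_\mathcal{R}$-witness turns any $\prec_\mathcal{R}$-chain out of $q'$ into one of the same length out of $q$; for the $\gamma$-branch case invoke the depth bound built into the definition of $\gamma$. The only presentational difference is that you isolate the chain-lifting step as a standalone monotonicity lemma (and explicitly discharge reflexivity and transitivity), whereas the paper carries out the same argument inline for the $\delta$-case and then says ``a similar argument'' for the $\gamma$-target case.
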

\begin{proof}
It suffices to validate the claim for the pairs that generate $\preceq_A$.
\begin{itemize}
    \item
    If $q' \preceq_A q$ because $q' \in \delta(q, \ltr{a})$ for some $\ltr{a} \in \Sigma$, then $\free{i}(\ltr{a}) \odot q' = q$.
    Now suppose that $q' \prec_\mathcal{R} q''$ for some $q'' \in M$; we then obtain $u, v, w, x, y \in M$ such that $q' = u \odot (v \obar (w \odot q'' \odot x)) \odot y$, and $v \obar (w \odot q'' \odot x) \neq w \odot q'' \odot x$.
    Clearly, $q = \free{i}(\ltr{a}) \cdot q' = \free{i}(\ltr{a}) \odot u \odot (v \obar (w \odot q'' \odot x)) \odot y$, and hence $q \prec_\mathcal{R} q''$.
    It follows that every chain starting at $q'$ can be turned into one starting at $q$, hence $D_\mathcal{R}(q') \leq D_\mathcal{R}(q)$.

    \item
    If $q' \preceq_A q$ because $q' \in \gamma(q, \mset{r, s})$ for some $r, s \in M$, then a similar argument to the previous case shows that $D_\mathcal{R}(q') \leq D_\mathcal{R}(q)$.

    \item
    If $q' \preceq_A q$ because $\gamma(q, \mset{q', r}) \neq \emptyset$ for some $r \in M$, then $D_\mathcal{R}(q') < D_\mathcal{R}(q)$ by definition of $\gamma$.
    \qed%
\end{itemize}
\end{proof}

We return to the proof of \Cref{lemma:strongly-recognisable-to-strongly-regular}.
If $u \in \lang_A$, then there exists a $q \in F$ with $q \arun{u}_A \unit$.
By the same argument as in \Cref{lemma:recognisable-to-saturated-regular}, $\free{i}(u) \odot \unit = q$---after all, the transitions of the new automaton are also transitions of the old automaton.
Hence $\free{i}(u) \in F$, meaning $u \in \lang_\mathcal{R}$.
Conversely, if $u \in \lang_\mathcal{R}$, then $\free{i}(u) \neq \zero$ by depth-nilpotency, and by the first claim above we have $q \arun{u}_A \unit$, meaning $u \in \lang_A$.

For fork-acyclicity, let $q, r, s \in M$ be such that $\gamma(q, \mset{r, s}) \neq \emptyset$.
In that case, $D_\mathcal{R}(r) < D_\mathcal{R}(q)$ by definition; thus $q \preceq_A r$ cannot hold, because that would imply $D_\mathcal{R}(r) \leq D_\mathcal{R}(r)$ by the second claim above.
\qed%
\end{proof}

The converse also exists in the literature: the language of a fork-acyclic PA can also be recognised by a depth-nilpotent pomset recogniser, because any fork-acyclic automaton can be converted to an equivalent sr-expression~\cite[Theorem~8.4]{kappe-brunet-luttik-silva-zanasi-2018b}, and every sr-expression can in turn be converted to a depth-nilpotent pomset recogniser~\cite[Theorem~3.9]{lodaya-weil-2000}.
A direct construction is also possible, if we use the techniques from~\cite{kappe-brunet-luttik-silva-zanasi-2018b} to show that every fork-acyclic PA can be converted to an equivalent fork-acyclic PA that is also saturated, and reuse the construction from \Cref{lemma:saturated-regular-to-recognisable}; we omit this proof for the sake of brevity.

\fi%

\end{document}